\documentclass[aps,prx,twocolumn,preprintnumbers,superscriptaddress,amsmath,amssymb,nofootinbib,nobalancelastpage]{revtex4-2}
\usepackage{graphicx, xcolor}
\graphicspath{{figures/}}
\usepackage{amsfonts,amsthm,microtype,mathrsfs,bbm}
\usepackage[colorlinks,allcolors=blue]{hyperref}
\usepackage{braket,mathtools,physics,enumerate}
\usepackage[capitalize]{cleveref}  
\usepackage{cases}
\usepackage{MnSymbol}
\usepackage{upgreek}

\newtheorem*{theorem*}{Theorem}
\newtheorem{proposition}{Proposition}
\newtheorem{lemma}{Lemma}
\newtheorem*{lemma*}{Lemma}

\theoremstyle{definition}
\newtheorem{definition}{Definition}
\newtheorem{example}{Example}

\newcommand{\1}{\mathbbm{1}}
\def\-{\raisebox{0 pt}{-}}
\DeclareMathOperator{\Span}{Span}
\DeclareMathOperator{\diag}{diag}

\newcommand{\prlsection}[1]{{\em {#1}---}}

\newcommand\subsetsim{\mathrel{%
\ooalign{\raise0.2ex\hbox{$\subset$}\cr\hidewidth\raise-0.8ex\hbox{\scalebox{0.9}{$\sim$}}\hidewidth\cr}}}

\begin{document}

\title{Measurement and feedforward circuits from quantum error correcting codes}

\author{Georgios Styliaris}
\thanks{Authors with equal contribution.}
\affiliation{Max Planck Institute of Quantum Optics, Hans-Kopfermann-Str. 1, Garching 85748, Germany}
\affiliation{Munich Center for Quantum Science and Technology (MCQST), Schellingstr. 4, 80799 M{\"{u}}nchen, Germany}

\author{Rahul Trivedi}
\thanks{Authors with equal contribution.}
\affiliation{Max Planck Institute of Quantum Optics, Hans-Kopfermann-Str. 1, Garching 85748, Germany}
\affiliation{Munich Center for Quantum Science and Technology (MCQST), Schellingstr. 4, 80799 M{\"{u}}nchen, Germany}

\date{\today}

\begin{abstract}
Measurements and feedforward enhance the power of shallow quantum circuits, enabling the deterministic implementation of global unitary operations and the preparation of long-range entangled states. We establish a general correspondence between all such protocols and quantum error-correcting codes: the circuit preceding the measurements acts as an encoder, and unitary feedforward can eliminate post-selection if and only if the measurement projectors are detectable errors on the codespace. This correspondence provides a common framework for state preparation and implementation of global unitaries with measurements and feedforward, turning both into a code-design problem. For stabilizer codes with Pauli measurements, although the resulting operation can be non-Clifford, we show that its nonstabilizerness originates entirely from the encoder. We overcome this restriction using non-Pauli measurements or non-additive codes, constructing protocols that generate long-range nonstabilizerness while requiring only single-qubit unitary corrections.
\end{abstract}

\maketitle
\prlsection{Introduction}What tasks can quantum computers achieve with limited resources? A natural setting to address this question is that of shallow quantum circuits~\cite{bravyi2018quantum,watts2019exponential}, since noise accumulation limits the number of layers that can be applied reliably in the absence of fault tolerance~\cite{preskill2018quantum}. The power of shallow unitary circuits with nearest-neighbor gates is, however, very limited, as correlations and entanglement can spread only over a short distance~\cite{bravyi2006lieb}. Long-range entangled states, such as GHZ states~\cite{greenberger1989bell}, metrologically useful Dicke states~\cite{pezze2018quantum}, and topologically ordered states~\cite{kitaev2003fault,levin2005string}, as well as long-range entangling gates, are thus beyond the reach of such circuits. This limitation is lifted once measurements and feedforward are allowed: measurements can spread entanglement over large distances, while feedforward compensates for unwanted measurement outcomes, enabling deterministic protocols in the spirit of quantum teleportation~\cite{bennett1993teleporting,gottesman1999demonstrating} and measurement-based quantum computation~\cite{raussendorf2001one}. Mid-circuit measurements and feedforward are also natural from an experimental point of view, as they are essential ingredients of active quantum error correction~\cite{gottesman2026surviving}. The demands of fault tolerance are expected to drive rapid improvements in both capabilities. Exploiting these advances could expand what quantum devices can achieve even before large-scale fault-tolerant quantum computation becomes available. It is therefore important to characterize the class of operations they render accessible. We refer to such protocols, which are the subject of this work, as \emph{circuit--measurement--feedforward} (CMF) protocols.

Efficient CMF protocols have been devised, both for preparing states and implementing unitary operations. This includes the preparation of tensor-network states in 1D~\cite{malz2024preparation,smith2023deterministic,smith2024constant,zhang2024characterizing,sahay2025classifying,stephen2025preparing} (including the GHZ~\cite{piroli2021quantum,watts2019exponential} and Dicke states~\cite{buhrman2024state,piroli2024approximating,yu2026efficient}), or certain subfamilies in 2D~\cite{zhang2024characterizing} and of topologically ordered states, including non-Abelian models~\cite{tantivasadakarn2023hierarchy,tantivasadakarn2024long,bravyi2022adaptive,ren2025efficient,lu2022measurement,tantivasadakarn2023shortest}. Arbitrary states can also be reached, at the cost of an exponential number of ancillas~\cite{zi2025constant}. Beyond state preparation, CMF protocols have been proposed for the implementation of long-range entangling gates~\cite{baumer2024efficient,baumer2025measurement-based,baumer2024quantum} and for the realization of dualities and gauging maps~\cite{li2023symmetry,lootens2025low,franco2025symmetry,christos2026non}. One organizing principle stands out: for Clifford circuits, time can be traded for space, as measurements and feedforward can parallelize any such circuit to constant depth using ancilla qubits~\cite{nielsen1997programmable,gottesman1999demonstrating}. For certain symmetry classes, gauging provides constructive measurement-and-feedforward protocols, including examples beyond the Clifford group~\cite{tantivasadakarn2023hierarchy,christos2026non}. These constructions leave open the more general problem of determining when measurement outcomes can be corrected by unitary feedforward for arbitrary inputs, and how this condition can be used to systematically design protocols with simple corrections.

Here we show that all CMF protocols, both for the implementation of unitary transformations and for state preparation, are in exact correspondence with \emph{quantum error correcting (QEC) codes}. The circuit prior to measurement, acting over a set of fixed ancilla qubits, can be interpreted as the \emph{encoder} of a QEC code, and the measurements as \emph{errors} over that code. The protocol then admits unitary feedforward that eliminates post-selection if and only if these errors are \emph{detectable} on the codespace. The converse also holds: any code, together with a complete family of detectable errors, gives rise to a CMF protocol, and we construct a feedforward. The correspondence thus places existing CMF protocols under a common framework and turns their design into a code-design problem. We then derive several consequences of that principle. First, we show that for stabilizer codes and Pauli-string measurements the feedforward can always be chosen to be a Pauli string while the nonlocal operation implemented by the measurement and feedforward layer is necessarily Clifford. As a result, all nonstabilizerness of the output originates from the encoder in the form of a logical operation on the code. We then design new classes of CMF protocols for which the measurements instead result in a genuinely non-Clifford action. We do so in two ways: by taking the measurements outside the Pauli group, and by considering non-additive quantum codes, while ensuring in both cases that the feedforward remains a simple product unitary. Our work thus yields systematic design principles for CMF protocols beyond the Clifford and gauging cases.

\prlsection{CMF protocols}We first formalize what we mean by a CMF protocol. Our definition treats state preparation and the implementation of unitary transformations on the same footing by referring to an \emph{isometry}, that is, a unitary acting on a register of which $n-k$ qubits are fixed to $\ket{0}$. These fixed qubits are the ancillas of the protocol and state preparation is the extreme case in which the entire input register is fixed.

\begin{definition}[CMF isometry]
    A CMF isometry
    \begin{align}
        U_{\rm targ}  \, \colon (\mathbb C^{2})^{\otimes k} \to (\mathbb C^{2})^{\otimes n} \text{ with } U_{\rm targ}^\dagger U_{\rm targ}^{} = \1_{2^{k}}
    \end{align}
    (only possible for $k \le n$) consists of
    \begin{enumerate}[(i)]
        \item a \emph{circuit}: an isometry $U_{\rm circ} \, \colon (\mathbb C^{2})^{\otimes k} \to (\mathbb C^{2})^{\otimes n}$ with $U_{\rm circ}^\dagger U_{\rm circ}^{} = \1_{2^{k}}$;
        \item \emph{measurements} on a subset of the $n$ qubits: a complete family of orthogonal projectors $\{E_a\}$ ($E_a^\dagger = E_a$, $E_a E_b = \delta_{ab} E_a$, and $\sum_a E_a = \1_{2^n}$), and
        \item \emph{feedforward}: for each outcome $a$ that occurs with nonzero probability, a unitary $V_a$ such that
    \begin{align} \label{eq:target_isometry_def}
        V_a\, E_a\, U_{\rm circ} \;\propto\; U_{\rm targ} \;, \quad \forall a \;.
    \end{align}
    \end{enumerate}
\end{definition}

By definition, a CMF isometry provides an explicit protocol to implement $U_{\rm targ}$ \emph{deterministically}. For that, one applies the circuit to the input register, with the remaining $n-k$ qubits fixed to $\ket{0}$, performs the measurement, and applies the feedforward unitary $V_a$ conditioned on the classical outcome $a$. Since \cref{eq:target_isometry_def} holds for every outcome, no post-selection is required.

Note that a CMF isometry can be post-processed by an arbitrary unitary $U_{\rm post}$, under which $U_{\rm targ} \mapsto U_{\rm post} U_{\rm targ}$ and $V_a \mapsto U_{\rm post} V_a$.
Assuming without loss of generality that the outcome labeled $a = 0$ occurs with nonzero probability, the choice $U_{\rm post} = V_0^\dagger$ renders the $a = 0$ feedforward trivial. We henceforth adopt this convention:
    \begin{align}
        V_0 = \1 \quad \text{and} \quad U_{\rm targ} \propto E_0 U_{\rm circ} \;.
    \end{align}

Our definition is phrased with a single round of measurements, but this is not restrictive, as multiple rounds can be encompassed by applying it in stages, setting $U_{\rm circ}$ of one round to be $U_{\rm targ}$ of the previous one. Our definition is thus completely general and encompasses numerous known protocols: the preparation of the GHZ state~\cite{piroli2021quantum,watts2019exponential}, of tensor-network states~\cite{smith2023deterministic,malz2024preparation,smith2024constant,sahay2025classifying,zhang2024characterizing}, and of topologically ordered states starting from symmetry-protected ones~\cite{tantivasadakarn2024long}, as well as the implementation of long-range entangling gates, such as the multi-target control-NOT~\cite{baumer2025measurement-based}, and the realization of dualities~\cite{lootens2025low}. Protocols requiring several rounds, such as the preparation of non-Abelian topological order~\cite{bravyi2022adaptive,tantivasadakarn2023hierarchy}, are covered by the multi-stage construction. As we will see, one of the consequences of this work is to show that all such protocols can be unified naturally in the language of QEC.

A CMF protocol is, in general, beneficial if it allows  to compress the depth of $U_{\rm targ}$ as compared to implementing the latter by a unitary circuit alone. All three constituents of a CMF protocol contribute to the depth: the circuit $U_{\rm circ}$, the measurement, whose projectors $E_a$ in general require a circuit to be rotated into a computational-basis measurement, and the (worst-case) feedforward $V_a$. The central problem is therefore to characterize when \cref{eq:target_isometry_def} admits a solution in which all three are shallow, while $U_{\rm targ}$ itself lies beyond the reach of shallow 2-qubit gate circuits; this needs to be separately addressed for nearest-neighbor, or some less restricted, connectivity. For now, we keep the setting fully general and we establish a general correspondence between CMF protocols and QEC codes.

\prlsection{Detectable errors}Let us recall some basic facts about QEC codes over qubits. These will be defined via an isometry, called the \emph{encoder}
\begin{align}
U_{\rm enc}: (\mathbb C^{2})^{\otimes k} \to (\mathbb C^{2})^{\otimes n}
\end{align}
encoding $k$ logical over $n$ physical qubits. The orthogonal projector onto the \emph{codespace}, defined as the image of $U_{\rm enc}$, is
    \begin{align}
            \mathcal P = U_{\rm enc}^{} U_{\rm enc}^\dagger \;.
    \end{align}
The \emph{errors} $E_a$ $(a = 0,1,\dots)$ relevant for this work will be mostly comprised of mutually orthogonal projectors
    \begin{align}
        E_a E_b = \delta_{ab} E_a, \; \quad E_a^\dagger = E_a \;.
    \end{align}
According to the Knill-Laflamme conditions~\cite{knill1997theory}, an error is called \emph{detectable} if
    \begin{align} \label{eq:detectable_def}
        \mathcal P E_a \mathcal P = c_a \mathcal P \,, \quad c_a \ge 0 \;,
    \end{align}
we call it also \emph{realizable} if $c_a \ne 0$; errors with $c_a = 0$ never occur over the codespace. To each such error we can assign an \emph{error subspace}, which is the image of
    \begin{align}
        \Pi_a = \frac{1}{c_a} E_a \mathcal P E_a \;.
    \end{align}
Note that, by \cref{eq:detectable_def}, $\Pi_a$ are orthogonal projectors and each error subspace has dimension $\tr(\Pi_a) = \Tr(\mathcal P) = 2^{k}$. Later, we will use as a reference error subspace the one with $a = 0$; recall that this outcome is assumed realizable, which is just a labeling convention. We will also be concerned with unitarily mapping error subspaces onto one another. The unitary that swaps $\Pi_0$ and $\Pi_a$ ($a \ne 0$), leaving the orthogonal complement invariant, is
    \begin{align}
        R_{a} = W^{}_a + W_a^\dagger + (\1 - \Pi_a - \Pi_0) \;, \quad
        W_a = \frac{\Pi_0 \mathcal P \Pi_a}{\sqrt{c_0c_a}} \;,
    \end{align}
as can be checked directly.

\prlsection{CMF isometries and QEC codes}Our key result is the following correspondence:

\begin{proposition}[CMF isometry $\Leftrightarrow$ QEC code]
\label{prop:CMF_QEC}$ $
    \begin{enumerate}[(i)]
        \item \emph{(CMF $\Rightarrow$ QEC.)} Every CMF isometry defines a QEC code with
            \begin{align}
                U_{\rm enc} = U_{\rm circ}
            \end{align}
        as its encoder, for which each measurement $E_a$ is a \emph{detectable error},
        \begin{align}
            \mathcal P E_a \mathcal P \propto \mathcal P\;,
        \end{align}
        where $\mathcal P$ is the codespace projector.
        \item \emph{(QEC $\Rightarrow$ CMF.)} Any QEC code with encoder $U_{\rm enc}$ that admits as detectable errors $E_a$ a complete family of mutually orthogonal projectors defines a CMF isometry with circuit
            \begin{align}
                U_{\rm circ} = U_{\rm enc} \;,
            \end{align}
        measurements the errors $E_a$, and feedforward which for $a \ne 0$ can be chosen as
        \begin{align} \label{eq:feedforward_general}
            V_a = W_a^{} + W^\dagger_a + (\1 - \Pi_0 - \Pi_a) \;.
        \end{align}
    \end{enumerate}
\end{proposition}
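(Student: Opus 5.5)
For (i), we start from the CMF relation $V_a E_a U_{\rm circ} = \lambda_a U_{\rm targ}$ and the convention $U_{\rm targ} \propto E_0 U_{\rm circ}$, normalized as $U_{\rm targ}^\dagger U_{\rm targ} = \1$. For any outcome $a$, multiply the relation by its adjoint: $U_{\rm circ}^\dagger E_a V_a^\dagger V_a E_a U_{\rm circ} = |\lambda_a|^2 U_{\rm targ}^\dagger U_{\rm targ}$. Since $V_a$ is unitary and $E_a$ is a projector, this gives $U_{\rm circ}^\dagger E_a U_{\rm circ} = |\lambda_a|^2 \1_{2^k}$. Conjugating with $U_{\rm circ}$ then yields $\mathcal P E_a \mathcal P = |\lambda_a|^2 \mathcal P$ with $\mathcal P = U_{\rm circ}U_{\rm circ}^\dagger$, which is detectability with $c_a=|\lambda_a|^2\ge 0$. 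Outcomes of zero probability satisfy $U_{\rm circ}^\dagger E_a U_{\rm circ}=0$, since $\tr(\rho\, U_{\rm circ}^\dagger E_a U_{\rm circ})=0$ for all input states $\rho$ and the operator is positive. These are therefore detectable with $c_a=0$. So (i) is essentially a one-line computation.

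For (ii), set $U_{\rm targ} = E_0 U_{\rm enc}/\sqrt{c_0}$. Detectability gives $U_{\rm enc}^\dagger E_0 U_{\rm enc} = c_0\1$, so $U_{\rm targ}$ is an isometry. Its image is the image of $\Pi_0 = E_0\mathcal P E_0/c_0$, because $U_{\rm targ}U_{\rm targ}^\dagger = \Pi_0$. We then need to check three things for each realizable $a\neq 0$.
\begin{enumerate}[(a)]
\item $\Pi_a$ and $\Pi_0$ are orthogonal projectors with mutually orthogonal ranges. This follows from $E_aE_0=0$.
\item $V_a$ as defined is unitary.
\item $V_a E_a U_{\rm enc} \propto U_{\rm targ}$.
\end{enumerate}
For (b), the key facts are $W_a^\dagger W_a = \Pi_a$ and $W_a W_a^\dagger = \Pi_0$. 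We verify the first as $\Pi_a\mathcal P\Pi_0\mathcal P\Pi_a/(c_0c_a)$. Expanding $\Pi_0$ and using $\mathcal P E_0\mathcal P = c_0\mathcal P$ reduces it to $\Pi_a\mathcal P\Pi_a/c_a$. Expanding $\Pi_a$ and using $\mathcal PE_a\mathcal P = c_a\mathcal P$ gives $E_a\mathcal P E_a\mathcal P E_a\mathcal PE_a/c_a^3 = \Pi_a$. The second identity follows in the same way. Hence $W_a$ is a partial isometry from ${\rm ran}\,\Pi_a$ onto ${\rm ran}\,\Pi_0$. Combined with $W_a^2=0$ (a consequence of orthogonality), this makes $W_a+W_a^\dagger$ a unitary swap on ${\rm ran}(\Pi_0+\Pi_a)$, and adding the identity on the complement gives a unitary $V_a$.

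For (c), note that the range of $E_aU_{\rm enc}$ lies in ${\rm ran}\,\Pi_a$, since $\Pi_aE_aU_{\rm enc} = E_a\mathcal PE_a\mathcal P U_{\rm enc}/c_a = E_aU_{\rm enc}$. Thus only the $W_a$ term acts, and
$V_aE_aU_{\rm enc} = \Pi_0\mathcal P\Pi_a E_a U_{\rm enc}/\sqrt{c_0c_a} = \Pi_0 \mathcal P E_a U_{\rm enc}/\sqrt{c_0c_a}$.
Writing $\Pi_0 = E_0\mathcal PE_0/c_0$ and using $\mathcal P E_0 \mathcal P E_a U_{\rm enc} = c_0 \mathcal P E_a U_{\rm enc}$, this becomes $E_0 \mathcal P E_a U_{\rm enc}/\sqrt{c_0c_a}$. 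Inserting $U_{\rm enc}=\mathcal PU_{\rm enc}$ gives $\mathcal PE_a\mathcal P U_{\rm enc}=c_aU_{\rm enc}$, hence $V_aE_aU_{\rm enc} = \sqrt{c_a/c_0}\,E_0U_{\rm enc} = \sqrt{c_a}\,U_{\rm targ}$, as required. Unrealizable outcomes need no feedforward.

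The main obstacle is keeping the projector algebra in (b) and (c) clean, in particular repeatedly sandwiching with $\mathcal P$ to invoke the Knill--Laflamme identity. I would organize this by first proving the auxiliary identities $\Pi_a\mathcal P\Pi_a = c_a\Pi_a$ and $\mathcal P\Pi_a\mathcal P = c_a\mathcal P$ (and their analogues for $a=0$), and then deriving everything else from these.
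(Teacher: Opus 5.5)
Your proposal is correct and follows essentially the same route as the paper's proof. Part (i) is the same multiply-by-the-adjoint-then-conjugate argument, and part (ii) uses the same $U_{\rm targ}=E_0U_{\rm enc}/\sqrt{c_0}$ and the same chain $V_aE_aU_{\rm enc}=W_aE_aU_{\rm enc}=\sqrt{c_a/c_0}\,E_0U_{\rm enc}=\sqrt{c_a}\,U_{\rm targ}$. You additionally check details the paper leaves as ``can be checked directly'': unitarity of $V_a$ via $W_a^\dagger W_a=\Pi_a$, $W_aW_a^\dagger=\Pi_0$ and $W_a^2=0$; why only the $W_a$ term acts on $E_aU_{\rm enc}$; and the treatment of zero-probability outcomes.
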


The proof is straightforward and can be found in the Appendix, along with the proofs of the subsequent results. Our result reduces the design of a CMF protocol to a problem of code design: one seeks a code whose encoder is a shallow circuit and which admits detectable errors that are easy to measure. It also inverts the usual role of the Knill-Laflamme conditions; here the error is induced deliberately by the measurement, and detectability is precisely what makes the protocol deterministic.

\prlsection{Stabilizer codes and Pauli measurements}The most widely used family of QEC codes is that of \emph{stabilizer} codes. Recall that a stabilizer group $\mathcal S = \langle g_1,\dots,g_{n-k} \rangle$ is an Abelian subgroup of the $n$-qubit Pauli group not containing $-\1$, generated by $n-k$ independent Pauli strings. Its \emph{stabilizer subspace} is the simultaneous $+1$ eigenspace of all generators, with projector $\mathcal P = \prod_{i} (\1 + g_i)/2$ and dimension $2^{k}$. In this formalism, a Pauli measurement is detectable error if and only if the Pauli anticommutes with at least one stabilizer generator or belongs (up to a sign) to the stabilizer group. A unitary is \emph{logical} if it preserves the codespace, $U \mathcal P U^\dagger = \mathcal P$; such a $U$ need not be Clifford.

We start with an example illustrating that, even when the QEC code in \cref{prop:CMF_QEC} is stabilizer and the measurements Pauli, the resulting CMF isometry can implement a global \emph{non-Clifford} operation:

\begin{example}[CMF fan-out] \label{ex:fanout}
    Consider the unitary
    \begin{align}
        V_{\rm fanout} = \ket{0}_1\bra{0} \otimes \bigotimes_{j=2}^{n} U^{(0)}_{j} + \ket{1}_1\bra{1} \otimes \bigotimes_{j=2}^{n} U^{(1)}_{j} \;.
    \end{align}
    Importantly, for generic single-qubit $U^{(c)}_j$, it is both a long-range entangling gate and not local-unitary equivalent to a Clifford gate (this follows directly from its nonflat operator Schmidt spectrum). Following Ref.~\cite{piroli2024approximating}, it admits the following constant-depth (i.e., $n-$independent) CMF implementation, using $n-1$ ancillas $2',\dots,n'$ initialized in $\ket{0}$ (each attached to the corresponding system site) and only nearest-neighbor gates in 1D.
    \begin{enumerate}[\emph{Step }1:]
        \item Copy the control in the computational basis, $\ket{c}_1 \mapsto \ket{c}_1 \ket{c}_{2'}\dots \ket{c}_{n'}$; this is itself a CMF protocol and requires a single round of measurements~\cite{piroli2021quantum}.
        \item Apply simultaneously the two-qubit controlled unitaries
        \begin{align}
            \bigotimes_{j=2}^{n} \left( \ket{0}_{j'}\bra{0} \otimes U^{(0)}_j + \ket{1}_{j'}\bra{1} \otimes U^{(1)}_j \right) \;.
        \end{align}
        \item Measure the Paulis $X_{2'},\dots,X_{n'}$, whereby all ancillas decouple.
        \item Apply $Z_1^{a}$, where $a \in \{0,1\}$ is the parity of the list of all measurement outcomes.
    \end{enumerate}
    We now show how $V_{\rm fanout}$, while non-Clifford, derives from a stabilizer code. Steps 1 and 2 together define an encoder $U_{\rm enc}$ onto the stabilizer subspace of $ \mathcal S = \langle  Z_1 Z_{2'}, Z_{1} Z_{3'},\dots, Z_{1} Z_{n'} \rangle$.
    It is in this logical action that the nonstabilizerness of $V_{\rm fanout}$ resides. The measurements of step 3 are detectable errors, as required by \cref{prop:CMF_QEC}.
    Finally, note that the resulting action decomposes as $V_{\rm fanout} = C \, U_{\rm enc}$, where $C$ is the Clifford unitary mapping the stabilizer generators as $Z_1 Z_{j'} \mapsto X_{j'}$.
\end{example}

We now show that several features of this example generalize:

\begin{proposition}[CMF isometries from stabilizer codes and Pauli measurements] \label{prop_stabilizer}
    Let
    \begin{align}
        \mathcal S = \langle g_1,\dots,g_{n-k} \rangle , \,
     \mathcal T = \langle P_1,\dots,P_M\rangle \; (M \le n-k)
    \end{align}
    be $n$-qubit stabilizer groups with codespace projectors $\mathcal P_{\mathcal S}$ and $\mathcal P_{\mathcal T}$. Assume that every element in $\mathcal T$ either belongs to $\mathcal S$ or anticommutes with at least one of its generators.

    Form a CMF isometry by setting $U_{\rm circ}$ an encoder onto $\mathcal P_{\mathcal S}$ and choose as measurements the Pauli strings $P_1,\dots,P_M$. Then:
        \begin{enumerate}[(i)]
        \item $U_{\rm targ} \propto \mathcal P_{\mathcal T} U_{\rm circ} $ is a CMF isometry with feedforward $V_{\boldsymbol a}$ that can be chosen to be a Pauli string.
        \item $U_{\rm targ}$ can be expressed as
    \begin{align} \label{eq:U_targ_clifford}
        U_{\rm targ} = C U_{\rm circ }
    \end{align}
    where $C$ is a Clifford unitary.
    \end{enumerate}
\end{proposition}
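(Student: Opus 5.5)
The plan is to write the measurement projectors explicitly and then pick both the feedforward and the Clifford $C$ from the stabilizer group $\mathcal S$ itself. For an outcome string $\boldsymbol a\in\{0,1\}^M$, the projector is
\begin{align}
E_{\boldsymbol a}=\prod_{i=1}^M \frac{\1+(-1)^{a_i}P_i}{2}\,,
\end{align}
so $E_{\boldsymbol 0}=\mathcal P_{\mathcal T}$. Signs of the $P_i$ are fixed so that $\boldsymbol a=\boldsymbol 0$ is realizable; this is possible because the elements of $\mathcal T$ lying in $\pm\mathcal S$ have a definite value on the codespace. The key observation for (i) is the following. Suppose some $s\in\mathcal S$ satisfies $sP_is^\dagger=(-1)^{a_i}P_i$ for all $i$. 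Then $sE_{\boldsymbol a}s^\dagger=E_{\boldsymbol 0}$, and since $s\mathcal P_{\mathcal S}=\mathcal P_{\mathcal S}$ we get
\begin{align}
sE_{\boldsymbol a}U_{\rm circ}=E_{\boldsymbol 0}\,s\,U_{\rm circ}=E_{\boldsymbol 0}U_{\rm circ}\,.
\end{align}
Hence $V_{\boldsymbol a}=s$ is a Pauli string, and no logical operation is introduced.

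Existence of such $s$ for every realizable $\boldsymbol a$ is a linear-algebra statement over $\mathbb F_2$. Consider the map $\phi:\mathcal S\to\mathbb F_2^M$ recording the commutation of $s$ with each $P_i$. Its image is the annihilator of
\begin{align}
K=\{\boldsymbol v:\ \textstyle\prod_i P_i^{v_i}\ \text{commutes with all of}\ \mathcal S\}\,.
\end{align}
By the hypothesis, every such product lies in $\pm\mathcal S$ and therefore takes a definite eigenvalue on the codespace. This forces $\sum_i v_ia_i=0$ for every realizable $\boldsymbol a$ and every $\boldsymbol v\in K$, because otherwise $E_{\boldsymbol a}\mathcal P_{\mathcal S}=0$. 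Hence realizable $\boldsymbol a\in K^\perp=\operatorname{im}\phi$. I expect this duality and sign bookkeeping to be the main point requiring care. In particular one must check that the commutation relation with the Pauli product is genuinely linear, and that relations among the $P_i$ themselves (products equal to $\pm\1$) are absorbed into $K$.

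For (ii), I would first run symplectic Gaussian elimination. Discard generators of $\mathcal T$ that lie in $\pm\mathcal S$ (they act as scalars on the codespace) and reduce the rest to independent $P_1,\dots,P_r$ spanning $\mathcal T$ modulo $\pm\mathcal S$. Next, choose generators of $\mathcal S$ so that $g_j$ anticommutes with $P_j$ and commutes with $P_i$ for $i\neq j$ ($j\le r$), while the remaining generators commute with all $P_i$. This is possible by the nondegeneracy established above: $\phi$ restricted to these $P_i$ is surjective. Then define
\begin{align}
C=\prod_{j=1}^r \frac{g_j+P_j}{\sqrt2}\,.
\end{align}
Each factor is a Clifford unitary, since the two Paulis anticommute, and factors with different $j$ commute. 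Using $g_j\mathcal P_{\mathcal S}=\mathcal P_{\mathcal S}$, the factors can be pushed onto the codespace one at a time, giving $C U_{\rm circ}\propto\prod_j(\1+P_j)U_{\rm circ}\propto\mathcal P_{\mathcal T}U_{\rm circ}$. Since $CU_{\rm circ}$ is an isometry, the proportionality constant is a phase, which can be absorbed, giving $U_{\rm targ}=CU_{\rm circ}$.

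A subtle step in this last push-through is that after applying $(\1+P_j)$ the state is no longer stabilized by $g_j$. The ordering argument must therefore use that $g_{j'}$ commutes with $P_j$ for $j'\neq j$, so that every remaining $g_{j'}$ still stabilizes the intermediate state.
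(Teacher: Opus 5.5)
Your proof is correct and takes essentially the paper's route. In (ii) you use the same generator normal form as the paper's Lemma~1, and your $C=\prod_j (g_j+P_j)/\sqrt2$ equals the paper's $\prod_i(\1+P'_ig'_i)/\sqrt2$ multiplied on the right by $\prod_j g_j$, which acts trivially on the codespace. In (i) you choose the Pauli feedforward in $\mathcal S$ by the same $\mathbb F_2$ syndrome-matrix linear algebra, phrased basis-free as $\mathrm{im}\,\phi=K^\perp$ rather than through the explicit $V_{\boldsymbol a}=\prod_{i\le s}{g'_i}^{\sum_j M_{ij}a_j}$. Your phrasing also makes explicit the realizability constraint on outcomes, which the paper leaves implicit.

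One small omission: your (i) never checks detectability of $E_{\boldsymbol a}$ (the paper does this by expanding $E_{\boldsymbol a}$ in elements of $\mathcal T$). So the fact that $\mathcal P_{\mathcal T}U_{\rm circ}$ is proportional to an isometry comes in your write-up only from (ii). Alternatively, you can get it by summing $U_{\rm circ}^\dagger E_{\boldsymbol a}U_{\rm circ}=U_{\rm circ}^\dagger E_{\boldsymbol 0}U_{\rm circ}$ over the realizable outcomes.
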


\noindent The proof is constructive, giving an explicit and efficient recipe for extracting the Pauli strings $V_{\boldsymbol a}$.

This result has two implications. First, it gives a systematic way to design CMF isometries with a nontrivial global action while keeping the feedforward a simple Pauli string. It also provides a stabilizer-code interpretation of numerous previous constructions, including protocols whose connection to stabilizer codes is less apparent because their encoders incorporate non-Clifford logical unitaries~\cite{piroli2024approximating,tantivasadakarn2024long,cao2026measurement}. Second, it delimits the reach of this class: the measurement process, which is what generates the long-range correlations, cannot create long-range nonstabilizerness in itself but only redistribute the nonstabilizerness already produced by the circuit (see \cref{eq:U_targ_clifford}). Recall that a state has \emph{long-range nonstabilizerness} if no geometrically local unitary circuit of depth bounded independently of the system size can transform it into a stabilizer state~\cite{ellison2021symmetry,korbany2025long,wei2025long}. Motivated by this, we now study two families resulting in a global non-Clifford action over the codespace genuinely arising from the measurement process.

\prlsection{Stabilizer codes with non-Pauli measurements}Our first route is to allow for non-Pauli measurements
    \begin{align}
        E_{a_1\dots a_M} = \prod_{i=1}^{M} \left[ \frac{1}{2}\left(\1 + (-1)^{a_i} (\cos \theta_i P_i + \sin \theta_i Q_i ) \right) \right]\;,
    \end{align}
where $P_i,Q_i$ are Pauli strings and $\theta_i \in [0, 2\pi)$. In order to have valid projectors we need
\begin{align}
    \{ P_i,Q_i \} = 0 \,, \quad \forall i 
\end{align}
while, for the different measurements to be parallelizable,
\begin{align}
        [P_i,P_j] = [Q_i,Q_j]  = [P_i,Q_j] = 0  \,,\quad \forall i \ne j \;.
\end{align}
However, the key challenge is now to find guarantee that the feedforward is easy to implement, while the CMF isometry remains nontrivial. Our result is:

\begin{proposition}[CMF isometry from stabilizer codes and non-Pauli measurements] \label{prop_nonpauli}
    Let $\mathcal S = \langle g_1,\dots,g_{n-k} \rangle$ be an $n$-qubit stabilizer group and choose $U_{\rm circ}$ an encoder onto the corresponding subspace. Take measurements specified by $P_1,\dots,P_M,Q_1,\dots,Q_M$ as above such that every $\prod_{i=1}^M P_i^{u_i} Q_i ^{v_i}$ anticommutes with at least one element in $\mathcal S$ $(u,v \in \mathbb F_2^{M}$ with $(u,v) \ne (0,0))$. Then:
    \begin{enumerate}[(i)]
        \item $U_{\rm targ} \propto E_{\boldsymbol 0} U_{\rm circ}$ is a CMF isometry with feedforward $V_{\boldsymbol{a}}$ that can be chosen to be a Pauli string for all measurement outcomes $\boldsymbol a = (a_1,\dots,a_M)$.
        \item There exist stabilizer codes and measurements satisfying the assumptions, for which
    \begin{align}
        U_{\rm targ} \ne U_{\rm CD}\, C\, U_{\rm circ}
    \end{align}
    for every constant-depth unitary circuit $U_{\rm CD}$, composed of 2-qubit gates with arbitrary connectivity, and every Clifford unitary $C$.
    \end{enumerate}
\end{proposition}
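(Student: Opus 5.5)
For part (i), I would first show that each $E_{\boldsymbol a}$ is a detectable error and then write down an explicit Pauli feedforward. Set $A_i = \cos\theta_i P_i + \sin\theta_i Q_i$. Expanding the product gives
\begin{align}
E_{\boldsymbol a} = 2^{-M}\sum_{u,v} \alpha_{\boldsymbol a}(u,v) \prod_i P_i^{u_i}Q_i^{v_i},
\end{align}
a linear combination of Pauli strings. Here the terms with $u_i=v_i=1$ cancel for each $i$, because $P_iQ_i + Q_iP_i = 0$. By hypothesis every nonidentity string $\prod_i P_i^{u_i}Q_i^{v_i}$ anticommutes with some $s\in\mathcal S$. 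Hence $\mathcal P_{\mathcal S} \prod_i P_i^{u_i}Q_i^{v_i} \mathcal P_{\mathcal S} = 0$, using the standard argument $\mathcal P s = \mathcal P$. Only the identity term survives, so $\mathcal P E_{\boldsymbol a}\mathcal P = 2^{-M}\mathcal P$. Every outcome is therefore detectable and realizable, and \cref{prop:CMF_QEC} already gives a CMF isometry.

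To obtain a Pauli feedforward, I would build, for each $i$, a Pauli string $F_i$ that commutes with $\mathcal S$ and anticommutes with $A_i$ but commutes with $A_j$ for $j\neq i$. Concretely, I would ask that $F_i$ anticommute with both $P_i$ and $Q_i$ and commute with $P_j,Q_j$ for $j\ne i$. This is a system of linear equations over $\mathbb F_2$ in the symplectic representation. I would solve it by noting that the hypothesis makes $\{g_1,\dots,g_{n-k}\}\cup\{P_i,Q_i\}$ symplectically "independent modulo $\mathcal S$": no nontrivial product of the $P$'s and $Q$'s lies in $\mathcal S^\perp$. By nondegeneracy of the symplectic form, the linear map $F\mapsto$ (commutation pattern with the $g$'s, $P$'s and $Q$'s) then has full rank on the relevant coordinates. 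Indeed, a linear dependence among these functionals would be a product of $P$'s and $Q$'s equal to an element of $\mathcal S$ up to phase, and such an element commutes with all of $\mathcal S$, contradicting the hypothesis.

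With the $F_i$ in hand, $F_i E_{\boldsymbol a} F_i = E_{\boldsymbol a \oplus e_i}$ and $F_i$ preserves $\mathcal P$. Setting $V_{\boldsymbol a} = \prod_i F_i^{a_i}$ then gives $V_{\boldsymbol a} E_{\boldsymbol a} U_{\rm circ} = E_{\boldsymbol 0} V_{\boldsymbol a} U_{\rm circ} = E_{\boldsymbol 0} U_{\rm circ} L_{\boldsymbol a}$, where $L_{\boldsymbol a}$ is a logical Pauli. This is a slight problem: the definition requires proportionality to $U_{\rm targ}$ itself. I would fix it by additionally requiring $F_i$ to act trivially on the logical qubits, i.e.\ $F_i\in\mathcal S$ up to phase. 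The anticommutation requirement is then with $A_i$ only, and the hypothesis that each $P_i$, $Q_i$ anticommutes with some stabilizer supplies these elements. Since $P_i$ and $Q_i$ may need different stabilizers, I would pick $F_i\in\mathcal S$ anticommuting with both, solving the same linear system inside $\mathcal S$ using the independence argument above. This step, making the linear algebra rigorous, is where I expect most of the care to go.

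For part (ii), I would give an explicit construction in the spirit of \cref{ex:fanout}. Take the GHZ-type code $\mathcal S = \langle Z_jZ_{j+1}\rangle$, with $P_i = X_i$ and $Q_i = Y_i$ on ancilla sites and angles $\theta_i=\theta$ generic. The measurement then implements a logical map like $\exp(i\phi\, Z_1\cdots Z_n)$, or a projection onto a product state with rotated weights. The aim is that $U_{\rm targ}U_{\rm circ}^\dagger$ restricted to the codespace has a highly entangled, non-Clifford character. To rule out $U_{\rm CD}CU_{\rm circ}$, I would use the state $\ket{\psi}=U_{\rm targ}\ket{\cdot}$ and show it has long-range nonstabilizerness. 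For example, a Lieb–Robinson/light-cone argument shows that constant-depth circuits preserve the property that distant reduced density matrices are close to products on stabilizer states. Alternatively, I would check that a suitable long-range correlator (e.g.\ $\langle Z_1Z_n\rangle$ or the stabilizer Rényi entropy of a distant pair) takes a non-dyadic value such as $\cos\theta$, impossible after a light-cone-bounded circuit acting on a stabilizer state. This lower bound is the main obstacle, and I would try the correlation-based light-cone argument first.
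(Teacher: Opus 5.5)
Your part (i) follows the paper's argument. You rightly discard normalizer elements because they leave a logical Pauli, and you then construct $S_i\in\mathcal S$ anticommuting with $P_i,Q_i$ and commuting with every other $P_j,Q_j$, so that $V_{\boldsymbol a}=\prod_i S_i^{a_i}$. Your explicit check $\mathcal P E_{\boldsymbol a}\mathcal P=2^{-M}\mathcal P$ is a useful addition. Incidentally, no $P_iQ_i$ terms ever arise in that expansion, since each factor is affine in $A_i$.

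For the linear algebra, the fact you need is the one you state first: no nontrivial $\prod_i P_i^{u_i}Q_i^{v_i}$ commutes with all of $\mathcal S$. This says the $2M\times(n-k)$ syndrome matrix of $\{P_i,Q_i\}$ against the generators of $\mathcal S$ has trivial left kernel, hence full row rank. Therefore $A w_i^T=(e_i,e_i)^T$ is solvable inside $\mathcal S$. The ``independence'' argument you then sketch concerns the unrestricted system over all Paulis. It would not by itself give solutions in $\mathcal S$: a logical operator is independent of $\mathcal S$, yet no stabilizer anticommutes with it.

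Part (ii) has a genuine gap, because your example does not work. First, it violates the hypotheses. With $\mathcal S$ generated only by the $Z_jZ_{j+1}$, the product $X_iY_i=iZ_i$ commutes with every element of $\mathcal S$.

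More seriously, repairing this (say by adding $X_1\cdots X_n$) does not help. The equatorial measurement $\cos\theta X_i+\sin\theta Y_i$ projects qubit $i$ onto $(|0\rangle+e^{i\theta}|1\rangle)/\sqrt2$, which overlaps equally with both GHZ branches. It therefore only imprints a relative phase: $\alpha|0\cdots0\rangle+\beta|1\cdots1\rangle\mapsto\alpha|0\cdots0\rangle+e^{-i\theta}\beta|1\cdots1\rangle$ on the unmeasured qubits. That logical phase is a single-qubit phase gate on any one physical qubit. The measured qubits can be decoupled by a Clifford (CNOTs and Hadamards) followed by single-qubit rotations. Hence $U_{\rm targ}=U_{\rm CD}CU_{\rm circ}$ with $U_{\rm CD}$ of depth one, which is exactly what (ii) must exclude. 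Consistently, your proposed witness gives $\langle Z_1Z_n\rangle=1$, not $\cos\theta$.

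The missing idea is that the measurement must \emph{reweight} the two GHZ branches. This requires a $Z$-component on the measured qubit. The paper (\cref{ex:nonpauli}) takes $\mathcal S=\langle Z_1Z_2,X_1\cdots X_n\rangle$, where the $X$-string is what makes $Z_1$ and $Y_1$ detectable, and measures $\cos\theta X_1+\sin\theta Z_1$. On the GHZ input this yields $\cos(\pi/4-\theta/2)|0\rangle^{\otimes(n-1)}+\sin(\pi/4-\theta/2)|1\rangle^{\otimes(n-1)}$. This state has long-range nonstabilizerness whenever $\sin\theta$ is not a dyadic rational, by Ref.~\cite{wei2025long}. Since any Clifford $C$ maps the encoded GHZ state to a stabilizer state, this excludes $U_{\rm CD}CU_{\rm circ}$.

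Also, the light-cone heuristic you propose (that distant reduced states of stabilizer states are close to products) is false, as GHZ itself shows. The relevant obstruction is the nontrivial one established in the cited work, which the paper simply invokes.
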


\noindent In particular, the feedforward Pauli is always of the form $V_{\boldsymbol a} = \prod_i S_i^{a_i}$, where each $S_i \in \mathcal S$ is efficiently obtained by solving a linear system (see Appendix).

Unlike Pauli measurements [\cref{eq:U_targ_clifford}], non-Pauli measurements can themselves generate long-range nonstabilizerness, rather than only redistribute the nonstabilizerness supplied by the encoder. The following Example illustrates this:

\begin{example} \label{ex:nonpauli}
    Consider the $n-$qubit CMF isometry with:
    \begin{enumerate}[(i)]
        \item Circuit $U_{\rm circ}$ an encoder onto the subspace stabilized by $\mathcal S = \langle Z_1Z_2,X_1\dots X_n \rangle$. 
        \item Measurement $E_{a} = \left[ \1 + (-1)^a (\cos \theta X_1 + \sin \theta Z_1  )\right]/2$ with $a \in  \{0,1\}$. $P_1 = X_1$ and $Q_1 = Z_1$ satisfy the conditions of \cref{prop_nonpauli}.
        \item Following the systematic procedure from the proof of \cref{prop_nonpauli}, the feedforward unitary is $V_{1} = - Y_1Y_2X_3\dots X_n$. Indeed, $V_1$ belong in $\mathcal S$, while it anticommutes with both $X_1$ and $Z_1$; hence
        \begin{align}
            E_0 U_{\rm circ} \propto V_{1} E_1 U_{\rm circ} \,,
        \end{align}
        that is, an unwanted $a = 1$ outcome can be corrected by a Pauli string.
    \end{enumerate}
    Now, take $\ket{{\rm GHZ}_n} = (\ket{0}^n + \ket{1}^n)/\sqrt{2}$, which is stabilized by $\mathcal S$. The action of the measurement $E_0$ gives
    \begin{align}
        \ket{{\rm GHZ}}_n \overset{E_0}{\longmapsto} \cos\left(\frac{\pi}{4}-\frac{\theta}{2}\right)\ket{0}^{\otimes (n-1)}+\sin\left(\frac{\pi}{4}-\frac{\theta}{2}\right)\ket{1}^{\otimes (n-1)}  ,
    \end{align}
    where for simplicity we suppressed the measured qubit, which factorizes. This is a stabilizer state precisely when $\theta$ is an integer multiple of $\pi/2$, that is, exactly when the measurement is Pauli. Whenever $\sin\theta$ is not a dyadic rational (a number of the form $m/2^r$), the nonstabilizerness cannot be removed by a constant-depth unitary circuit composed of 2-qubit gates, even with arbitrary connectivity~\cite{wei2025long}.
    
    Note that, despite generating long-range nonstabilizerness, the CMF protocol is realizable using only nearest-neighbor gates and computational basis measurements in 1D. This is since the GHZ state (more generally, $U_{\rm circ}$) can be prepared with these resources~\cite{piroli2021quantum,watts2019exponential}.
\end{example}

\prlsection{CMF protocols from non-additive QEC codes}Our second route retains Pauli measurements but replaces the stabilizer code with a non-additive one. Such codes can be constructed by combining different sectors, arising from errors, of a stabilizer code $\mathcal C$:
\begin{align}
    \mathcal C'=\bigoplus_{a=0}^{R-1}E_a\mathcal C,
    \qquad E_0=\1.
\end{align}
Here the $E_a$ are Pauli strings (not projectors) with distinct syndromes $s(E_a) \in \{ 0,1\}^{n-k}$, indicating commutation/anticommutation with the stabilizer generators. Taking $R=2^r$ adds $r$ logical qubits, while the resulting codespace need not be a stabilizer subspace. It is straightforward to show that a Pauli error $E$ is detectable on $\mathcal C'$ if
\begin{align}
    s(E)\notin
    \{s(E_a)+s(E_b):0\leq a,b<R\},
\end{align}
or if it belongs, up to a phase, to $\mathcal S(\mathcal C)$ and commutes with every $E_a$. The physical picture is that, in the former case the error maps the code to an orthogonal subspace, while in the latter it acts trivially on the code.

To obtain simple feedforward, we specialize to CSS codes~\cite{nielsen2010quantum}, whose stabilizer generators can be chosen as products of $X$ operators or products of $Z$ operators. Let $\mathcal C_Z$ denote the set of bit strings $z$ for which $\ket{z}$ satisfies all $Z$-type stabilizer constraints. Taking $E_a=X^{e_a}$, where $X^e=\bigotimes_jX^{e_j}$, gives the enlarged code
$\mathcal C'=\bigoplus_a X^{e_a}\mathcal C$. We now give simple to check sufficient conditions for a product of single-qubit $S=\diag(1,i)$ gates to correct every measurement outcome; in the Appendix we provide a complete set of necessary and sufficient conditions.

\begin{proposition}[CMF isometries from non-additive codes]
\label{prop_nonadditive}
    Let $U_{\rm circ}$ be an encoder onto $\mathcal C'$ as above, and measure $X^{f_1},\dots,X^{f_M}$. Suppose that, for each $i=1,\dots,M$, there exists $y_i\in\{0,1,2,3\}^n$ satisfying
    \begin{align}
    y_i^T\left(z+\sum_{j=1}^{M}v_j f_j\mod 2\right) = 2v_i \pmod 4.
    \end{align}
    for every $v\in\{0,1\}^M$ and
    $z\in\bigcup_{a=0}^{R-1}(e_a+\mathcal C_Z)$.
    Then
    \begin{align}
        U_{\rm targ}
        =2^{-M/2}\prod_{i=1}^{M}(\1+X^{f_i})U_{\rm circ}
    \end{align}
    is a CMF isometry with feedforward
    \begin{align}
        V_{\boldsymbol a}=\bigotimes_jS^{y_j},
        \qquad y=\sum_{i=1}^{M}a_i y_i \pmod 4.
    \end{align}
\end{proposition}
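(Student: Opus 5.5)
The plan is to verify the defining relation $V_{\boldsymbol a} E_{\boldsymbol a} U_{\rm circ} \propto E_{\boldsymbol 0} U_{\rm circ}$ of a CMF isometry directly, by working on a subspace that contains every state the protocol can produce. The measurement projectors are
\begin{align}
E_{\boldsymbol a} = \prod_{i=1}^M \tfrac12\bigl(\1 + (-1)^{a_i} X^{f_i}\bigr) .
\end{align}
Every codeword of $\mathcal C'$ is a superposition of computational basis states $\ket{z}$ with $z \in \bigcup_a (e_a + \mathcal C_Z)$. Each $X^{f_j}$ maps $\ket{z}$ to $\ket{z+f_j \bmod 2}$. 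So all vectors $E_{\boldsymbol a}U_{\rm circ}\ket{\psi}$ lie in
\begin{align}
\mathcal W = \Span\Bigl\{ \ket{z + \textstyle\sum_j v_j f_j \bmod 2} : z \in \bigcup_a (e_a+\mathcal C_Z),\ v \in \{0,1\}^M \Bigr\},
\end{align}
which is invariant under every $X^{f_j}$. I will establish all operator identities only on $\mathcal W$. This is sufficient, since the relation is only required when acting on $U_{\rm circ}$.

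\textbf{Step 1: the phase operators.} Define $D_i = \bigotimes_j S^{(y_i)_j}$, which acts on a basis state as $D_i\ket{w} = i^{\,y_i^T w}\ket{w}$ for $w\in\{0,1\}^n$. The hypothesis says exactly that $D_i\ket{z+\sum_j v_j f_j} = (-1)^{v_i}\ket{z+\sum_j v_j f_j}$ for all admissible $z,v$. A given $w$ may arise from several pairs $(z,v)$. The hypothesis then forces the phase to be consistent, so $D_i$ restricted to $\mathcal W$ is a well-defined $\pm1$-valued diagonal operator. Comparing phases on basis vectors, $(-1)^{v_i+\delta_{ij}}$ versus $(-1)^{v_i}$, gives:
\begin{itemize}
\item $D_i X^{f_j} = (-1)^{\delta_{ij}} X^{f_j} D_i$ on $\mathcal W$;
\item $D_i U_{\rm circ} = U_{\rm circ}$, since codewords have $v=0$.
\end{itemize}

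\textbf{Step 2: correcting the outcomes.} The anticommutation relations give $D_i E_{\boldsymbol a} = E_{\boldsymbol a \oplus \boldsymbol e_i} D_i$ on $\mathcal W$. Hence
\begin{align}
D_i E_{\boldsymbol a}U_{\rm circ} = E_{\boldsymbol a\oplus \boldsymbol e_i} U_{\rm circ}.
\end{align}
Iterating over the $i$ with $a_i=1$ gives $\prod_i D_i^{a_i} E_{\boldsymbol a} U_{\rm circ} = E_{\boldsymbol 0}U_{\rm circ}$. All the $D_i$ are diagonal and commute, and $S^4=\1$. Therefore $\prod_i D_i^{a_i} = \bigotimes_j S^{y_j}$ with $y=\sum_i a_i y_i \bmod 4$, which is the claimed $V_{\boldsymbol a}$.

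\textbf{Step 3: normalization.} Since each $V_{\boldsymbol a}$ is unitary, Step 2 gives $U_{\rm circ}^\dagger E_{\boldsymbol a} U_{\rm circ} = U_{\rm circ}^\dagger E_{\boldsymbol 0} U_{\rm circ}$ for every $\boldsymbol a$. Summing over all $2^M$ outcomes and using $\sum_{\boldsymbol a}E_{\boldsymbol a}=\1$ yields $U_{\rm circ}^\dagger E_{\boldsymbol 0}U_{\rm circ} = 2^{-M}\1$. This shows two things. First, every outcome is realizable and detectable, consistent with \cref{prop:CMF_QEC}. Second, $2^{M/2}E_{\boldsymbol 0}U_{\rm circ} = 2^{-M/2}\prod_i(\1+X^{f_i})U_{\rm circ}$ is an isometry, which is the stated $U_{\rm targ}$.

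\textbf{Main obstacle.} The delicate point is Step 1. The operators $D_i$ do not anticommute with $X^{f_i}$ globally, only on $\mathcal W$. The proof must therefore check carefully three facts:
\begin{itemize}
\item $\mathcal W$ is invariant under the $X^{f_j}$;
\item the phase $i^{y_i^Tw}$ is evaluated on the reduced bit string $w$, which is why the condition is stated with $\bmod 2$ inside and $\bmod 4$ outside;
\item possible coincidences $z+\sum_j v_jf_j = z'+\sum_j v'_jf_j$ are harmless, because the hypothesis then forces $(-1)^{v_i}=(-1)^{v'_i}$.
\end{itemize}
Once these are in place, the remaining steps are routine bookkeeping.
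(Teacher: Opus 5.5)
Your proof is correct. Its core computation coincides with the paper's: both reduce everything to the phase $i^{y^T w}$ that $\bigotimes_j S^{y_j}$ assigns to basis strings $w=z+\sum_j v_j f_j$. The difference is in how the pieces are assembled.

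The paper splits the claim into the two conditions of \cref{lemma:Det_error_CSS}. Condition (a) is detectability of every $X^{f}$ with $f$ in the span of the $f_i$; it follows from the detectability lemma for non-additive codes plus a contradiction argument from the hypothesis, after which \cref{prop:CMF_QEC} supplies the CMF isometry. Condition (b) is the phase condition, checked on the basis $\ket{\phi_{a,c_z}}$ by expanding $E_{\boldsymbol a}=2^{-M}\sum_v(-1)^{v\cdot\boldsymbol a}X^{\sum_i v_if_i}$ and matching term by term.

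You instead turn the hypothesis into the twisted commutation relation $D_iX^{f_j}=(-1)^{\delta_{ij}}X^{f_j}D_i$ on the invariant subspace $\mathcal W$. This gives the exact identity $V_{\boldsymbol a}E_{\boldsymbol a}U_{\rm circ}=E_{\boldsymbol 0}U_{\rm circ}$. Detectability and the constant $c_{\boldsymbol a}=2^{-M}$ then follow from unitarity of $V_{\boldsymbol a}$ and $\sum_{\boldsymbol a}E_{\boldsymbol a}=\1$.

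Your route buys two things. First, it is self-contained: it never uses the distinct-syndrome structure of $\mathcal C'$, the CSS basis, the detectability lemma, or \cref{prop:CMF_QEC}. It works for any isometry whose image is supported on strings where the hypothesis holds. Second, it actually justifies the stated prefactor $2^{-M/2}$. The paper's condition (a) only yields ``proportional to an isometry,'' and its contradiction argument is phrased for $f\notin\mathcal C_X^\perp$. That leaves implicit that no nonzero combination $\sum_i v_if_i$ lies in $\mathcal C_X^\perp$, which would change the normalization; your Step~3 rules this out automatically.

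The paper's route, in exchange, embeds the result in a necessary-and-sufficient characterization tied to Knill--Laflamme detectability. That shows what is required beyond this particular sufficient condition.
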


The condition guarantees both detectability and correction by single-qubit phase gates. Importantly, the resulting action on the codespace does not need to coincide with a Clifford.

\begin{example}
\label{ex:nonadditive}
    Consider $m\geq2$ blocks of six qubits and the product code $\mathcal C' = (\mathcal C'_0)^{\otimes m}$, where
\begin{align}
    \mathcal C'_0 =
    \Span\{\ket{x}:x\in\{0,1\}^6,\,
    \mathrm{wt}(x)\in\{0,4\}\}.
\end{align}
    and $\mathrm{wt}(x)$ is the Hamming weight. Each block encodes four logical qubits. Choose $U_{\rm circ}$ as a product of block encoders and measure
    \begin{align}
        P_i=\tilde X_i\tilde X_{i+1},
        \qquad
        \tilde X_i=\prod_{j=1}^{6}X_{6(i-1)+j},
    \end{align}
    for $i=1,\dots,m-1$. For outcomes $\boldsymbol a$, choose bits $v_i$ satisfying $v_i\oplus v_{i+1}=a_i$ and apply
    \begin{align}
        V_{\boldsymbol a}
        =\bigotimes_{i=1}^{m}(S^{v_i})^{\otimes6}.
    \end{align}
    This corrects every outcome.

    The codespace $\mathcal C'$ has exactly $m$ independent Pauli stabilizers, the block parities $Z^{\otimes6}$. The output retains these stabilizers and acquires the $m-1$ independent stabilizers $P_i=\tilde X_i\tilde X_{i+1}$. Since Clifford conjugation preserves the number of independent Pauli stabilizers, the resulting isometry cannot coincide with a Clifford unitary on $\mathcal C'$. Moreover, starting from $\ket{0}^{\otimes6m}$, the normalized output is $(\ket{\tilde{+}}^{\otimes m}+\ket{\tilde{-}}^{\otimes m})/\sqrt{2}$, where $\ket{\tilde{\pm}}=(\ket{0}^{\otimes6}\pm\ket{1}^{\otimes6})/\sqrt{2}$. This is a GHZ state of the blocks in the $\ket{\tilde{\pm}}$ basis, exhibiting long-range correlations. At the same time, the encoding, measurements, and corrections admit constant-depth implementations with local gates and ancillas in 1D.
    
\end{example}

\prlsection{Outlook}We established an exact correspondence between CMF protocols and QEC codes, in which the measurements correspond to errors on the code. Detectability of these errors is necessary and sufficient for unitary feedforward to eliminate post-selection. This framework unifies state preparation and performing unitary operations to arbitrary input, and we have provided systematic constructions beyond the Clifford setting with simple feedforward. For quantum variational algorithms incorporating measurements and feedforward, a natural direction is to use our toolbox to guide the systematic construction of shallow protocols with global correlations. Another direction relates to tensor-networks. Local measurements do not increase the Schmidt rank, so CMF isometries implemented with constant-depth, geometrically local circuits, and constant feedforward layers admit an exact tensor-network description with constant bond dimension. The converse problem, systematically determining when an isometry specified as a tensor-network admits an efficient CMF protocol, remains an interesting open problem.

\prlsection{Acknowledgments}We are grateful to Ignacio Cirac for insightful discussions. The authors acknowledge funding from the Munich Center for Quantum Science and Technology (MCQST), funded by the Deutsche Forschungsgemeinschaft (DFG) under Germany’s Excellence Strategy (EXC2111-390814868). R.T. acknowledges support from the European Union’s Horizon Europe research and innovation program under grant agreement number 101221560 (ToNQS).

The authors acknowledge use of AI for parts of this project. In particular, OpenAI's language models were used by the authors for assistance with proofs of the sufficient conditions in Propositions 3 and 4 as well as for designing the examples of CMF based on non-additive codes.

\bibliography{my_refs}


\appendix
\widetext

\setcounter{equation}{0}
\setcounter{figure}{0}
\setcounter{table}{0}
\setcounter{proposition}{0}
\makeatletter
\renewcommand{\thefigure}{S\arabic{figure}}

\section{Proof of Proposition~\ref{prop_app:CMF_QEC}}

\begin{proposition}[CMF isometry $\Leftrightarrow$ QEC code] \label{prop_app:CMF_QEC}
$ $
    \begin{enumerate}[(i)]
        \item \emph{(CMF $\Rightarrow$ QEC.)} Every CMF isometry defines a QEC code with
            \begin{align}
                U_{\rm enc} = U_{\rm circ}
            \end{align}
        as its encoder, for which each measurement $E_a$ is a \emph{detectable error},
        \begin{align}
            \mathcal P E_a \mathcal P \propto \mathcal P\;,
        \end{align}
        where $\mathcal P$ is the codespace projector.
        \item \emph{(QEC $\Rightarrow$ CMF.)} Any QEC code with encoder $U_{\rm enc}$ that admits as detectable errors $E_a$ a complete family of mutually orthogonal projectors defines a CMF isometry with circuit
            \begin{align}
                U_{\rm circ} = U_{\rm enc} \;,
            \end{align}
        measurements the errors $E_a$, and feedforward which for $a \ne 0$ can be chosen as
        \begin{align} 
            V_a = W_a^{} + W^\dagger_a + (\1 - \Pi_0 - \Pi_a) \;.
        \end{align}
    \end{enumerate}
\end{proposition}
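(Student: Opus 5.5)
Part (i) follows directly from \cref{eq:target_isometry_def}. For each outcome $a$ with nonzero probability, write $V_a E_a U_{\rm circ} = \lambda_a U_{\rm targ}$. Multiplying this by its adjoint and using unitarity of $V_a$ together with $U_{\rm targ}^\dagger U_{\rm targ} = \1_{2^k}$ gives
\begin{align}
U_{\rm circ}^\dagger E_a U_{\rm circ} = U_{\rm circ}^\dagger E_a^\dagger V_a^\dagger V_a E_a U_{\rm circ} = |\lambda_a|^2 \1_{2^k} \;,
\end{align}
where we used $E_a^\dagger E_a = E_a$. Conjugating by $U_{\rm circ}$ then yields $\mathcal P E_a \mathcal P = |\lambda_a|^2 \mathcal P$, so $E_a$ is detectable with $c_a = |\lambda_a|^2 > 0$. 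For outcomes with zero probability we have $\Tr(E_a U_{\rm circ}\rho U_{\rm circ}^\dagger)=0$ for every input $\rho$. Since $E_a$ is positive, this forces $E_a U_{\rm circ} = 0$, and hence $\mathcal P E_a \mathcal P = 0$, which is detectable with $c_a=0$. With $U_{\rm enc}=U_{\rm circ}$, this proves (i).

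For (ii), take $U_{\rm circ}=U_{\rm enc}$ and define $U_{\rm targ} = c_0^{-1/2} E_0 U_{\rm enc}$. Detectability gives $U_{\rm enc}^\dagger E_0 U_{\rm enc} = c_0\1$, so $U_{\rm targ}$ is an isometry, and $V_0=\1$ works trivially. Outcomes with $c_a=0$ never occur by the same positivity argument as above, so they need no feedforward. For a realizable $a\neq 0$, I would proceed in three steps.

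First, I check that $\Pi_a$ is a projector and that $\Pi_0\Pi_a=0$. Idempotence follows from $E_a\mathcal P E_a \mathcal P E_a = c_a E_a \mathcal P E_a$. Orthogonality holds because $\Pi_0\Pi_a \propto E_0 \cdots E_0 E_a \cdots$ and $E_0E_a=0$.

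Second, I show that $W_a$ is a partial isometry from the image of $\Pi_a$ onto the image of $\Pi_0$, so that $W_a^\dagger W_a = \Pi_a$ and $W_a W_a^\dagger = \Pi_0$. For this I use $\mathcal P E_a \mathcal P = c_a\mathcal P$ repeatedly. Writing $\Pi_0 \mathcal P \Pi_a = (c_0c_a)^{-1} E_0\mathcal P E_0 \mathcal P E_a \mathcal P E_a$ and collapsing $\mathcal P E_0 \mathcal P = c_0 \mathcal P$ gives $W_a = (c_0c_a)^{-1/2} E_0 \mathcal P E_a$. Then
\begin{align}
W_a^\dagger W_a = (c_0c_a)^{-1} E_a \mathcal P E_0 \mathcal P E_a = c_a^{-1} E_a\mathcal P E_a = \Pi_a \;,
\end{align}
and symmetrically $W_aW_a^\dagger=\Pi_0$. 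Combining these facts with the mutual orthogonality of $\Pi_0$ and $\Pi_a$, a direct expansion shows $V_a^\dagger V_a = \1$. In that expansion the cross terms such as $W_a^2$ and $W_a(\1-\Pi_0-\Pi_a)$ vanish.

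Third, I verify the correction identity
\begin{align}
V_a E_a U_{\rm enc} = W_a E_a U_{\rm enc} \;.
\end{align}
This holds because $W_a^\dagger E_a = W_a^\dagger \Pi_a E_a \cdots$ carries a leading $E_a$ factor killed by $E_0$ inside $W_a^\dagger$, and because $(\1-\Pi_0-\Pi_a)E_a\mathcal P = 0$, since $\Pi_a E_a \mathcal P = E_a\mathcal P$ by detectability. I then compute
\begin{align}
W_a E_a U_{\rm enc} = (c_0c_a)^{-1/2} E_0 \mathcal P E_a U_{\rm enc} = (c_0c_a)^{-1/2}\, c_a\, E_0 U_{\rm enc} \propto U_{\rm targ} \;,
\end{align}
using $\mathcal P E_a \mathcal P = c_a \mathcal P$ and $U_{\rm enc} = \mathcal P U_{\rm enc}$.

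The only step requiring some care is the second one. The identities $W_a^\dagger W_a=\Pi_a$ and $W_aW_a^\dagger=\Pi_0$ are exactly what make $V_a$ unitary, and the bookkeeping of the cross terms there is where an error would most likely occur. Everything else is a direct substitution of the Knill–Laflamme condition $\mathcal P E_a \mathcal P = c_a\mathcal P$.
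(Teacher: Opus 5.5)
Your proof is correct and follows the same route as the paper. For (i) you multiply the feedforward relation by its adjoint and conjugate by $U_{\rm circ}$; for (ii) you set $U_{\rm targ}=c_0^{-1/2}E_0U_{\rm enc}$ and reduce $V_aE_aU_{\rm enc}$ to $W_aE_aU_{\rm enc}=\sqrt{c_a}\,U_{\rm targ}$, while also supplying checks the paper leaves implicit (zero-probability outcomes, $W_a^\dagger W_a=\Pi_a$, $W_aW_a^\dagger=\Pi_0$, unitarity of $V_a$, and the vanishing of the $W_a^\dagger$ and complement terms).

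One small slip: $W_a^\dagger=W_a^\dagger\Pi_0$, not $W_a^\dagger\Pi_a$, so the clean justification is simply $W_a^\dagger E_a=(c_0c_a)^{-1/2}E_a\mathcal P E_0E_a=0$.
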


\begin{proof}
    (i) Writing the feedforward condition explicitly as
    \begin{align}
    V_a E_a U_{\rm circ} = \lambda_a U_{\rm targ}
    \end{align}
    and multiplying with the adjoint, we obtain
    \begin{align}
        U_{\rm circ}^\dagger E_a U_{\rm circ}
        = |\lambda_a|^2 \1_{2^k} \;,
    \end{align}
    where we used that $V_a$ is unitary, $E_a$ is a projector, and $U_{\rm targ}$ is an isometry. Multiplying from the left by $U_{\rm circ}$ and from the right by $U_{\rm circ}^\dagger$ gives
    \begin{align}
        \mathcal P E_a \mathcal P = |\lambda_a|^2 \mathcal P \;.
    \end{align}

    (ii) Since the error $E_0$ is detectable,
    \begin{align}
        \mathcal P E_0 \mathcal P = c_0 \mathcal P \,, \quad c_0 > 0
    \end{align}
    then
    \begin{align}
        U_{\rm targ} = \frac{1}{\sqrt{c_0}} E_0 U_{\rm enc}
    \end{align}
    is an isometry; recall, by our labeling convention, $c_0 \ne 0$. Indeed, using the detectability condition and the fact that $U_{\rm enc} = \mathcal P U_{\rm enc}$,
    \begin{align}
        U_{\rm targ}^\dagger U_{\rm targ}
        = \frac{1}{c_0} U_{\rm enc}^\dagger E_0 U_{\rm enc}
        = \1_{2^k} \;.
    \end{align}
    Outcomes with $c_a=0$ never occur and require no correction. For any measurement outcome $a$ with $c_a >0$, the prescribed feedforward implies that
    \begin{align}
        V_a\, E_a\, U_{\rm circ} \;\propto\; U_{\rm targ} \;, \quad \forall a \;.
    \end{align}
    Indeed,
    \begin{align}
        V_a E_a U_{\rm enc}
        &= W_a E_a U_{\rm enc} \\
        &= \sqrt{\frac{c_a}{c_0}} E_0 U_{\rm enc} \\
        &= \sqrt{c_a}\,U_{\rm targ} \;.
    \end{align}
    Note that the chosen feedforward in is non-unique.
\end{proof}

\section{Proof of Proposition~\ref{app_prop_stabilizer}}

Recall that a stabilizer group $\mathcal S = \langle g_1,\dots,g_{n-k} \rangle$ is an Abelian subgroup of the $n$-qubit Pauli group not containing $-\1$, generated by $n-k$ independent, mutually commuting Pauli strings. Its \emph{stabilizer subspace} is the simultaneous $+1$ eigenspace of all generators, with projector $\mathcal P = \prod_{i} (\1 + g_i)/2$ and dimension $2^{k}$. Each Pauli string $g$ is represented by its \emph{symplectic vector} $v(g) \in \mathbb F_2^{2n}$ which, by convention, we take to be \emph{row} vector. Thus multiplication of Paulis maps to addition of symplectic vectors~\cite{nielsen2010quantum}.

The generating set of stabilizer group $\mathcal S = \langle g_1,\dots,g_{n-k} \rangle$ over $n$ qubits is non-unique. Valid generating sets are described by
    \begin{align} \label{eq_app:transformation_generators}
        v(g'_i) = \sum_{j} M_{ij} v(g_j) 
    \end{align}
    where $M \in \mathbb M(n-k,\mathbb F_2)$ can be any invertible matrix. Now, consider another Abelian group $\mathcal T = \langle P_1,\dots,P_M \rangle$ generated by commuting and independent Pauli strings. The \emph{syndrome matrix} between $\mathcal S$ and $\mathcal T$, for a given choice of their generators, is defined as
    \begin{align} \label{eq:syndrome_definition}
        A_{ij} = v(P_i) \Lambda v(g_j)^T \;, \quad  \Lambda = \begin{pmatrix}
                0 & \1\\ \1 & 0
        \end{pmatrix}
    \end{align}
    which has an entry 1 if $P_i$ and $g_j$ anticommute and entry 0 otherwise. Under a change of generators \cref{eq_app:transformation_generators}, the syndrome matrix transforms as
    \begin{align}
        A' = A M^T \;.
    \end{align}

We will later need the following technical lemma.

\begin{lemma} \label{lem:change_generators}
    Let $\mathcal S = \langle g_1,\dots,g_{n-k} \rangle$ be a stabilizer group and consider $\mathcal T = \langle P_1 , \dots, P_M \rangle$ (with $M \le n-k$) generated by commuting and independent Pauli strings such that all of its elements either anticommute with at least one generator of $\mathcal S$ or belong in $\mathcal S$. Then:
    \begin{enumerate}[(i)]
        \item There exists a set of commuting and independent Pauli strings $P'_1,\dots, P'_M$ such that
    \begin{align}
        \mathcal T =  \langle P'_1, \dots, P'_s, \dots P'_M \rangle
    \end{align}
    with
    \begin{align} \label{eq:intersection_P_S}
        \langle P'_1, \dots, P'_s \rangle \cap \mathcal S = \{ \1 \}
    \end{align}
    and
    \begin{align}
        \mathcal S \cap\mathcal T = \langle P'_{s+1},\dots,P'_M \rangle \;.
    \end{align}    
    \item  For $1\le i\le s$, define the syndrome matrix $A_{ij} = v(P'_i) \Lambda v(g_j)^T$. Then the row-rank of $A$ is full.
    \item There is a new set of generators $\mathcal S = \langle g'_1,\dots,g'_{n-k} \rangle$ such that
    \begin{align}
        \{P'_i,g'_i\} = 0 \;, \quad \forall i \le s
    \end{align}
    with no other anticommuting element.
    \end{enumerate}
\end{lemma}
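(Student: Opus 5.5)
The plan is linear algebra over $\mathbb F_2$, working with symplectic vectors throughout and treating signs separately.

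\emph{Part (i).} Consider $\mathcal S\cap\mathcal T$. It is a subgroup of $\mathcal T$, and one subtlety must be handled first: the hypothesis says an element of $\mathcal T$ lies in $\mathcal S$, not merely in $\mathcal S$ up to sign. The symplectic images $v(\mathcal S\cap\mathcal T)$ therefore form a subspace $K$ of $V_T=v(\mathcal T)$, and $K$ maps bijectively back to $\mathcal S\cap\mathcal T$, because $-\1\notin\mathcal T$ and so group elements are determined by their symplectic vectors. Let $M-s=\dim K$. I pick a basis $v(P'_{s+1}),\dots,v(P'_M)$ of $K$ and extend it to a basis of $V_T$ by vectors $v(P'_1),\dots,v(P'_s)$. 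The corresponding elements of $\mathcal T$ are commuting, independent, and generate $\mathcal T$, and $\mathcal S\cap\mathcal T=\langle P'_{s+1},\dots,P'_M\rangle$. For \cref{eq:intersection_P_S}, take any product $Q$ of the $P'_1,\dots,P'_s$ that lies in $\mathcal S$. Then $Q\in\mathcal S\cap\mathcal T$, so $v(Q)\in K$. Since $v(Q)$ also lies in the span of the complementary vectors, $v(Q)=0$, and hence $Q=\1$.

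\emph{Part (ii).} Suppose some nontrivial combination of the rows of $A$ vanishes, i.e.\ $u\neq0$ with $uA=0$. Then $Q=\prod_i (P'_i)^{u_i}$ commutes with every $g_j$. By hypothesis $Q$ is then in $\mathcal S$. It is not $\1$, because the $P'_i$ are independent, and this contradicts part (i). Hence the rows of $A$ are linearly independent; in particular $s\le n-k$.

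\emph{Part (iii).} Under a change of generators $A\mapsto AM^T$, so I need an invertible $M$ with $AM^T=(\1_s\;\,0)$. Since $A$ is $s\times(n-k)$ of full row rank, there is a right inverse $B$ ($(n-k)\times s$) with $AB=\1_s$. I extend its columns by a basis of $\ker A$, which has dimension $n-k-s$. The resulting square matrix $N=(B\;\,K_0)$ is invertible: if $Bx+K_0y=0$, applying $A$ gives $x=0$, and then $y=0$. So $AN=(\1_s\;\,0)$, and I set $M^T=N$. The new generators $g'_j=\prod_i g_i^{N_{ij}}$ are, up to signs, generators of $\mathcal S$. Taking products of actual group elements keeps them in $\mathcal S$, so the signs are automatically consistent. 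These generators satisfy $\{P'_i,g'_i\}=0$ and commute with $P'_i$ otherwise.

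The main obstacle is the sign bookkeeping in (i): making sure "belongs to $\mathcal S$" is handled at the level of group elements rather than symplectic vectors. The rest is routine rank arguments.
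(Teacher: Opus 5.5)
Your proposal is correct and follows the paper's proof essentially step for step: (i) extend an independent generating set of $\mathcal S\cap\mathcal T$ to one of $\mathcal T$; (ii) a vanishing row combination of $A$ gives a nontrivial element of $\langle P'_1,\dots,P'_s\rangle$ commuting with every $g_j$, which lies in $\mathcal S$ by hypothesis and so contradicts (i); (iii) choose an invertible $M$ with $AM^T=(\1_s\;\,0)$. Your explicit right-inverse-plus-kernel construction of $M$ and your sign bookkeeping via injectivity of $v$ on $\mathcal T$ make precise what the paper leaves implicit; the only small omission is noting that $P'_i$ for $i>s$ lies in the abelian group $\mathcal S$ and hence commutes with every $g'_j$.
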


\begin{proof}

\emph{(i)} Note that $\mathcal S \cap\mathcal T$ is a subgroup and let 
    \begin{align}
    \mathcal S \cap\mathcal T = \langle P'_{s+1},\dots,P'_M \rangle
    \end{align}    
    be an independent set of its generators. Completing this to an independent generating set of $\mathcal T$, we can write
    \begin{align}
        \mathcal T =  \langle P'_1, \dots, P'_s, \dots P'_M \rangle
    \end{align}
    and independence of the full set gives \cref{eq:intersection_P_S}.
    
\emph{(ii)} It follows by contradiction. Linear dependence of its rows would imply that there exist $\alpha_i \in \{0,1\}$, not all zero, such that
    \begin{align}
        \sum_i \alpha_i v(P'_i) \Lambda v(g_j)^T = 0 \quad \forall j \;,
    \end{align}
    i.e., that $\prod_{i=1}^{s} {P'_i}^{\alpha_i}$ commutes with every $g_j$, and hence with all of $\mathcal S$. By hypothesis, every element of $\mathcal T$ either anticommutes with some generator of $\mathcal S$ or belongs to $\mathcal S$; since $\prod_{i=1}^{s} {P'_i}^{\alpha_i}$ commutes with all generators, $\prod_{i=1}^{s} {P'_i}^{\alpha_i} = \1$, contradicting the independence of $P'_1,\dots,P'_s$.

\emph{(iii)} Since $A$ is full row rank, we pick a matrix $M$ (\cref{eq_app:transformation_generators}) such that
    \begin{align}
        A' = \begin{pmatrix}
            \1_s & 0
        \end{pmatrix}
        \;.
    \end{align}
    This choice defines a new set of generators for $\mathcal S$ with the desired property.  
\end{proof}

We are now ready for the proof.

\begin{proposition}[CMF isometries from stabilizer codes and Pauli measurements]  \label{app_prop_stabilizer}
    Let
    \begin{align}
        \mathcal S = \langle g_1,\dots,g_{n-k} \rangle , \,
     \mathcal T = \langle P_1,\dots,P_M\rangle \; (M \le n-k)
    \end{align}
    be $n$-qubit stabilizer groups with codespace projectors $\mathcal P_{\mathcal S}$ and $\mathcal P_{\mathcal T}$. Assume that every element in $\mathcal T$ either belongs to $\mathcal S$ or anticommutes with at least one of its generators.

    Form a CMF isometry by setting $U_{\rm circ}$ an encoder onto $\mathcal P_{\mathcal S}$ and choose as measurements the Pauli strings $P_1,\dots,P_M$. Then:
        \begin{enumerate}[(i)]
        \item $U_{\rm targ} \propto \mathcal P_{\mathcal T} U_{\rm circ} $ is a CMF isometry with feedforward $V_{\boldsymbol a}$ that can be chosen to be a Pauli string.
        \item $U_{\rm targ}$ can be expressed as
    \begin{align} 
        U_{\rm targ} = C U_{\rm circ }
    \end{align}
    where $C$ is a Clifford unitary.
    \end{enumerate}
\end{proposition}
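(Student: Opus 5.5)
The plan is to first apply Lemma~\ref{lem:change_generators} to $\mathcal S$ and $\mathcal T$. This lets us replace the measured Paulis by $P'_1,\dots,P'_M$, where $P'_{s+1},\dots,P'_M\in\mathcal S$, together with new generators $g'_1,\dots,g'_{n-k}$ of $\mathcal S$ such that $\{P'_i,g'_i\}=0$ for $i\le s$ and all other pairs $(P'_i,g'_j)$ commute. Measuring $P_1,\dots,P_M$ is the same as measuring the commuting group $\mathcal T$, and the outcome projectors are $\mathcal P_{\mathcal T,\boldsymbol a}=\prod_i(\1+(-1)^{a_i}P_i)/2$. Relabeling the generators is only an invertible $\mathbb F_2$-linear map on the outcome bits, so it is enough to work with the primed generators and outcomes $\boldsymbol b$.

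\textbf{Part (i).} For $i>s$ we have $P'_i=\pm h_i$ with $h_i\in\mathcal S$, so $P'_i$ acts on the codespace as a fixed sign. Only the single outcome $b_i$ consistent with that sign can occur, and in particular $P'_i=+h_i$, since otherwise $\mathcal P_{\mathcal T}\mathcal P_{\mathcal S}=0$; the convention that $a=0$ is realizable forces this. These outcomes then require no correction. For $i\le s$, choose a Pauli $V^{(i)}$ that anticommutes with $P'_i$ and commutes with every $g'_j$ and with every $P'_{l}$ for $l\ne i$. A natural candidate is $V^{(i)}=g'_i$ itself: it anticommutes with $P'_i$, commutes with $P'_l$ for $l\neq i$ by part (iii) of the lemma, and fixes the codespace. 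Hence
\begin{align}
g'_i\,\mathcal P_{\mathcal T,\boldsymbol b}\,U_{\rm circ}=\mathcal P_{\mathcal T,\boldsymbol b\oplus e_i}\,g'_i\,U_{\rm circ}=\mathcal P_{\mathcal T,\boldsymbol b\oplus e_i}\,U_{\rm circ}.
\end{align}
The feedforward is therefore $V_{\boldsymbol b}=\prod_{i\le s}(g'_i)^{b_i}$, which is a Pauli string that, pleasingly, lies in $\mathcal S$. Detectability follows from Proposition~\ref{prop_app:CMF_QEC}(i), or directly from the fact that each $\mathcal P_{\mathcal S}\mathcal P_{\mathcal T,\boldsymbol b}\mathcal P_{\mathcal S}$ equals $2^{-s}\mathcal P_{\mathcal S}$ for admissible $\boldsymbol b$. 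The resulting normalization gives $U_{\rm targ}=2^{s/2}\mathcal P_{\mathcal T}U_{\rm circ}$.

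\textbf{Part (ii).} We want a Clifford $C$ with $C\,\mathcal P_{\mathcal S}=2^{s/2}\mathcal P_{\mathcal T}\mathcal P_{\mathcal S}$. Following Example~\ref{ex:fanout}, define $C$ by mapping the stabilizer $g'_i$ to $P'_i$ for $i\le s$ while fixing the rest of the stabilizer structure. Concretely, take
\begin{align}
C=\prod_{i\le s}\frac{1}{\sqrt2}\bigl(P'_i+g'_i\bigr)\quad\text{or}\quad C=\prod_{i\le s}\exp\!\bigl(\tfrac{\pi}{4}P'_ig'_i\bigr).
\end{align}
Each factor $(P'_i+g'_i)/\sqrt2$ is a Hermitian unitary, because $P'_i$ and $g'_i$ anticommute and both square to $\1$. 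It is Clifford, and the factors for different $i$ commute, since the pairs $(P'_i,g'_i)$ mutually commute across different $i$: the $P'$ commute among themselves, the $g'$ commute among themselves, and the cross terms commute by the lemma. Acting on the codespace, where $g'_i=\1$, gives $\frac{1}{\sqrt2}(P'_i+\1)\mathcal P_{\mathcal S}=\sqrt2\,\frac{\1+P'_i}{2}\mathcal P_{\mathcal S}$. Multiplying over $i\le s$, and moving each projector past the remaining factors using the commutation relations, yields $C\mathcal P_{\mathcal S}=2^{s/2}\prod_{i\le s}\frac{\1+P'_i}{2}\mathcal P_{\mathcal S}$. The factors with $i>s$ equal $\1$ on the codespace. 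So $C U_{\rm circ}=U_{\rm targ}$.

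\textbf{Main obstacle.} The part I expect to need care is the bookkeeping for the ordering of the product in $C$, in particular commuting $(\1+P'_j)/2$ past $(P'_i+g'_i)$ for $i\ne j$. This is ensured by part (iii) of the lemma. A secondary point is the sign and realizability check for the outcomes with $i>s$, together with translating the primed outcome labels back to the original $P_1,\dots,P_M$.
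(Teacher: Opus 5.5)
Your proposal is correct and follows the paper's proof essentially step by step: Lemma~\ref{lem:change_generators}, the feedforward $\prod_{i\le s}(g'_i)^{b_i}$ with $b_i=\sum_j M_{ij}a_j$, and the Clifford $\prod_{i\le s}\exp(\tfrac{\pi}{4}P'_ig'_i)=\prod_{i\le s}(\1+P'_ig'_i)/\sqrt2$, which is exactly the paper's $C$ (your other choice $(P'_i+g'_i)/\sqrt2$ differs factorwise by the Pauli $P'_i$ and works equally well). The sign discussion for $i>s$ is unnecessary: the lemma gives $P'_i\in\mathcal S$ there by construction, and the hypothesis already excludes any element $-h$ with $h\in\mathcal S$ from $\mathcal T$, since such an element commutes with every generator of $\mathcal S$ without belonging to it.
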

\begin{proof}
    \emph{(i)} Each measurement
        \begin{align}
           E_{a_1\dots a_M} = \prod_{i=1}^{M} \left( \frac{\1 + (-1)^{a_i} P_i}{2} \right)\;,  \quad a_i \in \{ 0,1\} 
        \end{align}
    can be expressed as a linear combination of elements of $\mathcal T$ and every element of $\mathcal T$ satisfies the detectability condition $\mathcal P_{\mathcal S} (\cdot) \mathcal P_{\mathcal S} \propto \mathcal P_{\mathcal S}$. Thus $\mathcal P_{\mathcal S} E_{\boldsymbol a} \mathcal P_{\mathcal S} \propto \mathcal P_{\mathcal S}$, i.e., each $E_{\boldsymbol a}$ is a detectable error over $\mathcal S$. By \cref{prop:CMF_QEC}, $U_{\rm targ} \propto E_{\boldsymbol 0} U_{\rm circ}$ is then a CMF isometry.

    It remains to show that the feedforward can be chosen to be a Pauli string. Using \cref{lem:change_generators}, we obtain new sets of generators $\mathcal T = \langle P'_1,\dots,P'_M \rangle$ and $\mathcal S = \langle g'_1,\dots,g'_{n-k} \rangle$ such that
    \begin{align}
        \{P'_i,g'_i\} = 0 \;, \quad \forall i \le s \;,
    \end{align}
    with no other anticommuting element, and $\mathcal S \cap \mathcal T = \langle P'_{s+1},\dots,P'_M \rangle$. Suppose the two sets of generators
    \begin{align}
        \mathcal T = \langle P_1, \dots, P_M \rangle = \langle P'_1, \dots, P'_M \rangle
    \end{align}
    are related by the change-of-basis matrix $M$ as
    \begin{align}
        P'_i = \prod_j P_{j}^{M_{ij}} \;.
    \end{align}
    Substituting we have
    \begin{align}
        E_{a_1\dots a_M} \propto \prod_i \left( \1 + (-1)^{a_i} P_i \right) = \prod_i \left( \1 + (-1)^{\sum_{j}M_{ij}a_j} P'_i \right) \;.
    \end{align}
    From \cref{lem:change_generators}, $g'_i P'_i = -P'_i g'_i$ while $g'_i$ commutes with all $P'_j$ for $j \ne i$. Note that this holds for all $i \le s$, while for $i > s$ the corresponding $P'_i$ are part of the stabilizer. As a result, we can choose
    \begin{align}
        V_{a_1\dots a_M} = \prod_{i=1}^s {g_i'}^{\sum_{j}M_{ij}a_j} \;,
    \end{align}
    a Pauli string, which indeed achieves
    \begin{align}
        V_{a_1\dots a_M} E_{a_1\dots a_M} U_{\rm circ} \propto E_{0\dots 0} U_{\rm circ} \;.
    \end{align}

    \emph{(ii)} Define
\begin{align}
    C = \prod_{i=1}^{s} \frac{\1 + P'_i g'_i}{\sqrt{2}} \;.
\end{align}
Each factor is a Clifford unitary, while the different factors commute by the
commutation relations established in (i). Since
$g'_i U_{\rm circ} = U_{\rm circ}$, these relations also give
\begin{align}
    C U_{\rm circ}
    &= 2^{-s/2}\prod_{i=1}^{s}(\1 + P'_i)U_{\rm circ} \\
    &= 2^{s/2}E_{\boldsymbol 0}U_{\rm circ} \;,
\end{align}
where we used that 
\begin{align}
    E_{\boldsymbol 0}
    = \prod_{i=1}^{M}\frac{\1 + P_i}{2}
    = \prod_{i=1}^{M}\frac{\1 + P'_i}{2} \;.
\end{align}
and that $P'_i \in \mathcal S$ for $i>s$.
Unitarity of $C$ implies $c_{\boldsymbol 0}=2^{-s}$, and hence
$U_{\rm targ}=C U_{\rm circ}$.
\end{proof}

\section{Proof of Proposition~\ref{app_prop_nonpauli}}

We recall the setting. The measurements are
    \begin{align}
        E_{a_1\dots a_M} = \prod_{i=1}^{M} \left[ \frac{1}{2}\left(\1 + (-1)^{a_i} (\cos \theta_i P_i + \sin \theta_i Q_i ) \right) \right]\;,
    \end{align}
with $P_i,Q_i$ Pauli strings satisfying
\begin{align}
    \{ P_i,Q_i \} = 0 \,, \quad \forall i 
\end{align}
so that the $E_{\boldsymbol a}$ are projectors, and
\begin{align}
        [P_i,P_j] = [Q_i,Q_j]  = [P_i,Q_j] = 0  \,,\quad \forall i \ne j \;,
\end{align}
so that the measurements are parallelizable.

\begin{proposition}[CMF isometry from stabilizer codes and non-Pauli measurements] \label{app_prop_nonpauli}
    Let $\mathcal S = \langle g_1,\dots,g_{n-k} \rangle$ be an $n$-qubit stabilizer group and choose $U_{\rm circ}$ an encoder onto the corresponding subspace. Take measurements specified by $P_1,\dots,P_M,Q_1,\dots,Q_M$ as above such that every $\prod_{i=1}^M P_i^{u_i} Q_i ^{v_i}$ anticommutes with at least one element in $\mathcal S$ $(u,v \in \mathbb F_2^{M}$ with $(u,v) \ne (0,0))$. Then:
    \begin{enumerate}[(i)]
        \item $U_{\rm targ} \propto E_{\boldsymbol 0} U_{\rm circ}$ is a CMF isometry with feedforward $V_{\boldsymbol{a}}$ that can be chosen to be a Pauli string for all measurement outcomes $\boldsymbol a = (a_1,\dots,a_M)$.
        \item There exist stabilizer codes and measurements satisfying the assumptions, for which
    \begin{align}
        U_{\rm targ} \ne U_{\rm CD}\, C\, U_{\rm circ}
    \end{align}
    for every constant-depth unitary circuit $U_{\rm CD}$, composed of 2-qubit gates with arbitrary connectivity, and every Clifford unitary $C$.
    \end{enumerate}
\end{proposition}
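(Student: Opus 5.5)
For part (i), I will first check detectability and then construct a Pauli correction by linear algebra, in the spirit of the proof of \cref{app_prop_stabilizer}. Write $R_i(\theta_i) = \cos\theta_i P_i + \sin\theta_i Q_i$. Expanding $E_{\boldsymbol a}$ gives a linear combination of products $\prod_i P_i^{u_i}Q_i^{v_i}$, possibly with phases from reordering $P_iQ_i$. By hypothesis every such product with $(u,v)\ne(0,0)$ anticommutes with some element of $\mathcal S$. Hence $\mathcal P\,(\prod_i P_i^{u_i}Q_i^{v_i})\,\mathcal P = 0$, since $\mathcal P g = \mathcal P$ and $gO=-Og$ give $\mathcal P O \mathcal P = -\mathcal P O\mathcal P$. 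Only the identity term survives, so $\mathcal P E_{\boldsymbol a}\mathcal P = 2^{-M}\mathcal P$. Every $E_{\boldsymbol a}$ is therefore detectable and realizable, and \cref{prop:CMF_QEC} makes $U_{\rm targ}\propto E_{\boldsymbol 0}U_{\rm circ}$ a CMF isometry.

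For the Pauli feedforward, I look for $S_i\in\mathcal S$ that anticommutes with both $P_i$ and $Q_i$ and commutes with $P_j,Q_j$ for all $j\ne i$. Then $S_i R_i S_i = -R_i$ and $S_i R_j S_i = R_j$ for $j\ne i$. Setting $V_{\boldsymbol a}=\prod_i S_i^{a_i}$ gives $V_{\boldsymbol a}E_{\boldsymbol a} = E_{\boldsymbol 0}V_{\boldsymbol a}$, and $V_{\boldsymbol a}U_{\rm circ}=\pm U_{\rm circ}$ because $V_{\boldsymbol a}\in\mathcal S$ up to sign. The existence of the $S_i$ reduces to a linear system over $\mathbb F_2$. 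Form the $2M\times(n-k)$ syndrome matrix $A$ whose rows are the syndromes of $P_1,\dots,P_M,Q_1,\dots,Q_M$ with respect to $g_1,\dots,g_{n-k}$. If a nontrivial combination of rows vanished, the corresponding product $\prod P_i^{u_i}Q_i^{v_i}$ would commute with all of $\mathcal S$, contradicting the hypothesis; this is the argument of \cref{lem:change_generators}(ii). So $A$ has full row rank $2M$. For each $i$ we can then solve $A x^{(i)} = e_{P_i}+e_{Q_i}$ and set $S_i=\prod_j g_j^{x^{(i)}_j}$. I expect this step to be routine.

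For part (ii), I will use \cref{ex:nonpauli}: $\mathcal S=\langle Z_1Z_2, X_1\cdots X_n\rangle$, $P_1=X_1$, $Q_1=Z_1$. The hypothesis is easy to check: $X_1$, $Z_1$ and $Y_1$ each anticommute with $Z_1Z_2$ or with $X_1\cdots X_n$. Suppose, for contradiction, that $U_{\rm targ}=U_{\rm CD}CU_{\rm circ}$. The encoder maps some computational input to $\ket{{\rm GHZ}_n}$ (one logical state; choose $U_{\rm circ}$ accordingly). Then $C\ket{{\rm GHZ}_n}$ is a stabilizer state, and $U_{\rm CD}^\dagger$ maps the output $\cos(\pi/4-\theta/2)\ket{0}^{\otimes n-1}+\sin(\pi/4-\theta/2)\ket{1}^{\otimes n-1}$ (tensored with the measured qubit) to a stabilizer state by a constant-depth circuit. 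For $\theta$ with $\sin\theta$ not dyadic rational, this contradicts the long-range nonstabilizerness result of \cite{wei2025long}. The computation of $E_0\ket{{\rm GHZ}_n}$ in the example is direct, since $\bra{\pm_\theta}$-type projections on qubit 1 give amplitudes $\cos$ and $\sin$ of $\pi/4-\theta/2$.

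The main obstacle is this last step. The cited theorem gives the dyadic criterion for generalized GHZ states, and I need it to survive the extra factorized qubit and arbitrary (non-local) connectivity. I would handle the factorized qubit by tracing it out, or by absorbing it into a product state, which is itself a stabilizer state up to depth one. I would rely on the cited result exactly as stated for arbitrary-connectivity 2-qubit constant-depth circuits, noting that it concerns states with nonflat GHZ weights.
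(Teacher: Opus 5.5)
Your proposal is correct and follows the paper's proof essentially step by step: you show the combined $2M\times(n-k)$ syndrome matrix has full row rank, solve for $S_i\in\mathcal S$ and take $V_{\boldsymbol a}=\prod_i S_i^{a_i}$, and for (ii) you combine the GHZ example of \cref{ex:nonpauli} with the cited long-range nonstabilizerness result (your explicit check that $\mathcal P E_{\boldsymbol a}\mathcal P=2^{-M}\mathcal P$ is left implicit in the paper). The factorized-qubit issue you flag is settled by a short step. Let $\ket{\phi}$ be the state onto which $E_0$ projects qubit 1, and $\alpha=\pi/4-\theta/2$. A single-qubit rotation $\ket{\phi}\mapsto\ket{0}$ followed by a CNOT from qubit 2 onto qubit 1 maps $\ket{\phi}_1\otimes\bigl(\cos\alpha\ket{0}^{\otimes(n-1)}+\sin\alpha\ket{1}^{\otimes(n-1)}\bigr)$ to $\cos\alpha\ket{0}^{\otimes n}+\sin\alpha\ket{1}^{\otimes n}$, so the $n$-qubit form of the cited theorem applies directly with no ancilla.
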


\begin{proof}
    \emph{(i)} We need to show that there exist Pauli strings $V_{a_1 \dots a_M}$ such that
    \begin{align}
        E_{0\dots 0} U_{\rm circ} \propto V_{a_1 \dots a_M} E_{a_1 \dots a_M} U_{\rm circ} \quad \forall a_j \in \{0,1\} \,.
    \end{align}
    To establish this, we will construct Pauli strings $S_i \in \mathcal S$ such that
    \begin{subequations} \label{subeq:comm_SPQ}
    \begin{align}
        \{ S_{i}, P_i\} &= \{ S_{i}, Q_i\} = 0 \quad \forall i \\
        [ S_i,P_j ] &= [S_i , Q_j] = 0 \quad \forall i\ne j
    \end{align}
    \end{subequations}
    which implies we can choose
    \begin{align}
        V_{a_1\dots a_M} = \prod_{i=1}^M S^{a_i}_i \;.
    \end{align}
    To construct the $S_i$, consider the combined syndrome matrix
    \begin{align}
        A_{ij} = \begin{cases}
			v(P_i) \Lambda v(g_j)^T, & 1 \le i \le M \,,\\
            v(Q_{i-M}) \Lambda v(g_j)^T, & M+1 \le i \le 2M \,.
		 \end{cases}
    \end{align}
    We now show that $A$ is full row rank. Indeed, consider two binary (row) vectors $u,v \in \mathbb F_2^{M}$ with $(u,v) \ne (0,0)$. Then
    \begin{align}
        (u,v)^T \in \ker (A^T)  \; \Longleftrightarrow \; \left[ \prod_{i=1}^M P_i^{u_i} Q_i ^{v_i} , g_j \right] = 0 \quad \forall j
    \end{align}
    which is forbidden by the hypothesis; thus $A^T$ has trivial kernel. Therefore $A$ is surjective and the equation
    \begin{align} \label{eq:w_unit_vecs}
        A w_i^T = (e_i,e_i)^T
    \end{align}
    for $w_i \in \mathbb F_2^{n-k}$ has a solution; here $e_i = (0_1,\dots,1_i,\dots,0_M)$. Defining
    \begin{align}
        S_i = \prod_{j=1}^{n-k} g_j^{(w_i)_j}
    \end{align}
    we get from \cref{eq:w_unit_vecs} the desired commutation and anticommutation relations Eqs.~\eqref{subeq:comm_SPQ} while, by construction, $S_i \in \mathcal S$ and is thus a Pauli string.

    \emph{(ii)} For $\sin\theta$ not a dyadic rational, the normalized measurement of $\ket{\mathrm{GHZ}_n}$ in \cref{ex:nonpauli} produces long-range nonstabilizerness, even for arbitrary connectivity~\cite{wei2025long}. This excludes $U_{\rm targ}=U_{\rm CD}C U_{\rm circ}$.
\end{proof}

\section{Non-additive codes}

\subsection{Generalities}
Given a stabilizer code $\mathcal{C}$ encoding $k$ logical qubits into $n$ physical qubits (denoted $[[n,k]]$), generated by $\mathcal S(\mathcal{C}) = \langle g_1, g_2 \dots g_{n - k}\rangle$, we will define the syndrome of a Pauli string $P$ as the vector $s(P) \in \{0, 1\}^{n-k}$ satisfying
\begin{align}
s_i(P) = \begin{cases}
0 & \text{if } [P, g_i] = 0,\\
1 & \text{if } \{P, g_i\}= 0.
\end{cases}
\end{align}
Any Pauli string with nonzero syndrome, $s(P)\neq 0^{n-k}$, is a detectable error.
\begin{definition}
Given a stabilizer code $\mathcal{C}$ and a set of Pauli errors with nonzero syndrome $E_1, E_2 \dots E_{R - 1}$ such that $s(E_i) \neq s(E_j)$, we define the corresponding non-additive code with
\begin{align}
\mathcal{C}' = \sum_{a = 0}^{R -1} E_a \mathcal{C},
\end{align}
where $E_0 = \1$.
\end{definition}
Since if $s(E_a) \neq s(E_b)$, $E_a \mathcal{C} \perp E_b \mathcal{C}$. Therefore, $\text{dim}(\mathcal{C}')= 2^{k} R$. We will typically assume that $R = 2^r$ for some $r \in \mathbb{Z}_{\geq 0}$. Then $\mathcal{C}'$ is a $[[n, k + r]]$ code.

The next lemma characterizes the set of detectable errors on $\mathcal{C}'$.
\begin{lemma}\label{lemma:det_error}
A Pauli string $E$ is a detectable error if either
\begin{enumerate}
\item $E \in \{\pm 1, \pm i\} \cdot \mathcal S(\mathcal{C})$ and  $[E, E_a] = 0$ $\forall a \in \{0, 1, 2 \dots R - 1\}$, or
\item $s(E) \notin \{s(E_a) + s(E_b): a, b \in \{0, 1, 2 \dots R - 1\}\}$.
\end{enumerate}
\end{lemma}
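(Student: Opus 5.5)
We need to show $\mathcal P' E \mathcal P' = c\,\mathcal P'$, where $\mathcal P' = \sum_a \Pi_a$ and $\Pi_a = E_a \mathcal P E_a^\dagger$ is the projector onto $E_a\mathcal C$. Here $\mathcal P$ is the projector onto $\mathcal C$. These $\Pi_a$ are mutually orthogonal because the syndromes $s(E_a)$ are distinct. The plan is to expand
\begin{align}
\mathcal P' E \mathcal P' = \sum_{a,b} E_a \mathcal P E_a^\dagger E E_b \mathcal P E_b^\dagger
\end{align}
and analyze each block $\mathcal P (E_a^\dagger E E_b)\mathcal P$ using the standard stabilizer fact. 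For a Pauli string $F$, the compressed operator $\mathcal P F \mathcal P$ vanishes if $s(F)\neq 0$. If $F$ commutes with all of $\mathcal S(\mathcal C)$, then $\mathcal P F\mathcal P = F\mathcal P$. Syndromes are additive under multiplication up to phases, and $s(E_a^\dagger)=s(E_a)$, so $s(E_a^\dagger E E_b) = s(E_a)+s(E)+s(E_b)$.

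\emph{Case 2.} Suppose $s(E)\notin\{s(E_a)+s(E_b)\}$. Then every block has nonzero syndrome, so every term vanishes and $\mathcal P' E\mathcal P' = 0$. This is detectable with $c=0$, a non-realizable error.

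\emph{Case 1.} Suppose $E = \omega g$ with $g\in\mathcal S(\mathcal C)$ and $E$ commuting with every $E_a$. Then $s(E)=0$, and $E_a^\dagger E E_b$ has syndrome $s(E_a)+s(E_b)$. This is nonzero for $a\ne b$ by distinctness, so only the diagonal terms $a=b$ survive. For those, the commutation hypothesis gives $E_a^\dagger E E_a = E\,E_a^\dagger E_a = E$. The middle factor is therefore $\mathcal P E \mathcal P = \omega\,\mathcal P$, since $g\mathcal P=\mathcal P$. Hence
\begin{align}
\mathcal P' E \mathcal P' = \omega \sum_a E_a\mathcal P E_a^\dagger = \omega\,\mathcal P'.
\end{align}
This is of the required form. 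Two side remarks: Hermiticity of $E$ restricts $\omega$ to $\pm1$, and if we want $c\ge 0$ in \cref{eq:detectable_def} we read detectability up to a phase, as for Pauli errors generally.

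The only delicate point is the bookkeeping in Case 1. The off-diagonal terms must vanish, which uses distinctness of the syndromes. The diagonal terms must collapse to $E$ itself rather than to $\pm E$ with an $a$-dependent sign. This is exactly where the hypothesis $[E,E_a]=0$ is used: without it, different sectors would pick up opposite phases, giving $\mathcal P'E\mathcal P'\not\propto\mathcal P'$. I would state this explicitly in the proof to motivate the condition.
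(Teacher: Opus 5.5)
Your proof is correct and follows essentially the same route as the paper. Both decompose $\mathcal P_{\mathcal C'}=\sum_a E_a\mathcal P_{\mathcal C}E_a$, eliminate the off-diagonal blocks $\mathcal P_{\mathcal C}E_aEE_b\mathcal P_{\mathcal C}$ ($a\neq b$) via the syndrome condition, and use $[E,E_a]=0$ to make the diagonal blocks independent of $a$. The only difference is that the paper starts its block analysis from the assumption of detectability, so it reads as a necessity argument, whereas you verify the two stated sufficient conditions directly, which is the direction the lemma actually asserts.
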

\begin{proof}
Note that
\begin{align}
\mathcal P_{\mathcal{C}'} = \sum_{a} E_a \mathcal P_{\mathcal{C}} E_a.
\end{align}
Now, suppose $E$ is a Pauli and $\mathcal P_{\mathcal{C}'}  E \mathcal P_{\mathcal{C}'} = c_E \mathcal P_{\mathcal{C}'} $ for some $c_E\in \mathbb{C}$. Then
\begin{align}
\sum_{a, b}E_a \mathcal P_{\mathcal{C}} E_a EE_b \mathcal P_{\mathcal{C}} E_b = c_E \sum_a E_a \mathcal P_{\mathcal{C}} E_a.
\end{align}
Using the fact that $E_a \mathcal{C} \perp E_b\mathcal{C}$ for $a \neq b$, we then obtain that
\begin{align}
\text{For } a\neq b, \;  \mathcal P_{\mathcal{C}} E_a E E_b \mathcal P_{\mathcal{C}} = 0 \quad \text{and} \quad  \mathcal P_{\mathcal{C}} E_a E E_a  \mathcal P_{\mathcal{C}}  = c_E  \mathcal P_{\mathcal{C}}.
\end{align}
Thus, we obtain that for $ a\neq b$,
\begin{align}
s(E_a E E_b) \neq 0 \implies s(E) \neq s(E_a) + s(E_b) .
\end{align}
Furthermore, since $\mathcal P_{\mathcal{C}}E_a E E_a \mathcal P_{\mathcal{C}} = (-1)^{\chi(E, E_a)} \mathcal P_{\mathcal{C}} E \mathcal P_{\mathcal{C}}$ where where $\chi(E, E_a) = 0$ if $[E, E_a] = 0$ and otherwise $1$, we obtain that 
\begin{enumerate}
\item $\mathcal P_{\mathcal{C}} E\mathcal P_{\mathcal{C}} = 0$ or equivalently $s(E) \neq 0$, in which case $c_E = 0$.
\item If $\mathcal P_{\mathcal{C}} E\mathcal P_{\mathcal{C}} \neq 0$, then $E$ must be a stabilizer of the code $\mathcal{C}$ and $\chi(E, E_a) $ is independent of $a\implies [E, E_a] = 0$. 
\end{enumerate}
\end{proof}

\subsection{CSS codes}
We next derive a set of sufficient conditions under which a product of single-qubit powers of the phase gate $S = \diag(1,i)$ is enough to correct the outcome of a measurement. We call this correction \emph{transversal}, since it acts independently on each physical qubit. Here we consider the restrictive setting of a CSS code~\cite{nielsen2010quantum}. Recall that a $[[n, k]]$ CSS code $\mathcal{C} = \text{CSS}(\mathcal{C}_X, \mathcal{C}_Z)$ is generated with two classical codes: an $X$ code $\mathcal{C}_X$ which is $[n, k_X]$ and a $Z$ code $\mathcal{C}_Z$ which is $[n, k_Z]$ and $k = k_X + k_Z - n$.  The generators of a CSS code are $\{h_X: h_X \in \mathcal{C}_X^\perp\} \cup \{h_Z: h_Z \in \mathcal{C}_Z^\perp\}$. We will only consider $X$ errors: For a binary column vector $e \in \mathbb F_2^n$, write $X^e = \bigotimes_{j=1}^n X^{e_j}$. Let $H_{Z} \in \mathbb F_2^{(n-k_Z)\times n}$ be a parity-check matrix of $\mathcal{C}_Z$, whose rows form a basis of $\mathcal{C}_Z^\perp$ and specify the $Z$-type stabilizer generators. The $Z$-check part of the syndrome of $X^e$ is $s(e) = H_{Z} e$, with arithmetic modulo 2; the $X$-check components of the full syndrome $s(X^e)$ vanish since $X^e$ commutes with all $X$ stabilizers. Furthermore, an $X$ Pauli $X^e$ is a detectable error on $\mathcal{C}$ if $e  \in \mathcal{C}_X^\perp \cup \mathcal{C}_Z^c$.

Given $e_1, e_2 \dots e_{R -1}$ such that $H_{Z} e_a \neq H_{Z} e_b \neq 0$, we can then define the non-additive code
\begin{align}
\mathcal{D} = \sum_{a = 0}^{R - 1} X^{e_a} \mathcal{C}.
\end{align}
A computational basis for CSS code can be constructed starting from $c_z \in \mathcal{C}_Z$, and from it constructing the state~\cite{gottesman2026surviving}
\begin{align}
\ket{\phi_{c_z}} =\frac{1}{\sqrt{\abs{\mathcal{C}_X^\perp}}} \sum_{h_x \in \mathcal{C}_X^\perp} \ket{c_z + h_x}.
\end{align}
Note that, if $c_z, c_z' \in \mathcal{C}_Z$ such that $c_z + c_z' \in \mathcal{C}_X^\perp$, then $\ket{\phi_{c_z}} = \ket{\phi_{c_z'}}$, and if $c_z + c_z' \notin \mathcal{C}_X^\perp$ then $\bra{\phi_{c_z}}\phi_{c_z'}\rangle = 0$. 

For a quantum state $\ket{\psi}$, let us define $\text{supp}_c(\ket{\psi})$ to be its computational basis support, i.e., a list of all the computational basis bit strings appearing in $\ket{\psi}$. Then, one can note that
\begin{align}
\text{supp}_c(\ket{\phi_{c_z}}) = \mathcal{C}_X^\perp + c_z.
\end{align}
We can extend this to construct a computational basis for $\mathcal{C}'$ by simply constructing a computational basis $\ket{\phi_{a, c_z}}$ for each of the subspaces $X^{e_a}\mathcal{C}_Z$
\begin{align}
\ket{\phi_{a, c_z}} = \frac{1}{\sqrt{\abs{\mathcal{C}_X^\perp}}} \sum_{h_x \in \mathcal{C}_X^\perp} \ket{e_a + c_z + h_x}.
\end{align}
We now show that distinct basis vectors of this form have entirely different computational basis support, i.e., $\forall a, a' \in \{0, 1, 2 \dots R - 1\}\text{ with } e_a \neq e_{a'}, c_z, c_z' \in \mathcal{C}_Z \text{ with } c_z + c_z' \notin \mathcal{C}_X^\perp: $
\begin{align}
\text{supp}_c(\ket{\phi_{a, c_z}}) \cap \text{supp}_c(\ket{\phi_{a', c_z'}}) = \emptyset.
\end{align}
To see this, assume it was not true: then, $\exists h_x, h_x' \in \mathcal{C}_X^\perp$ such that $c_z + e_a +h_x = c_z' + e_{a'} + h_{x}'$, from which it would follow that $H_{Z} e_a = H_{Z} e_{a'}$ since $H_{Z} c_z =H_{Z} c_{z}' = H_{Z} h_{x} = H_{Z} h_{x}'=  0$ which contradicts the assumption that $H_{Z} e_a \neq H_{Z} e_{a'}$. 

We now state the main result of this section.
\begin{lemma}\label{lemma:Det_error_CSS}
Suppose that $\mathcal{C}'$ is the code as constructed above, and let us measure $X^{f_1}, X^{f_2} \dots X^{f_M}$ on the code-space. Then, this constructs a CMF protocol correctable with transversal $S$ gates if and only if
\begin{enumerate}
\item[(a)] The operation $\prod_{i = 1}^M (\1 + X^{f_i})$ is proportional to an isometry on $\mathcal{D}$ if $\forall f \in \textnormal{span}(\{f_i: i \in\{1, 2 \dots M\}\}) \subseteq \mathcal{C}_X^\perp $ or $H_{Z} f \notin \{H_{Z} (e_a + e_b) : \forall a, b \in \{0, 1 \dots R-1\}\}$.
\item[(b)]For every $\boldsymbol a \in \{0, 1\}^M$, $\exists y \in \{0, 1, 2, 3\}^n$ dependent only on $\boldsymbol a$ such that $\forall v \in \{0, 1\}^M$ and $z \in \bigcup_{a = 0}^{R - 1}(e_a + \mathcal{C}_Z)$:
\begin{align}
y^\text{T}\bigg(z + \sum_{i = 1}^M v_i f_i\bigg) = 2 \sum_{i = 1}^M v_i a_i  \textnormal{ mod }4.
\end{align}
\end{enumerate}
\end{lemma}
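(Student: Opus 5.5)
The plan is to reduce both conditions to statements about computational-basis supports, using the orthonormal basis $\ket{\phi_{a,c_z}}$ of $\mathcal C'$ constructed above, whose supports are the pairwise disjoint cosets $e_a + c_z + \mathcal C_X^\perp$. Write $E_{\boldsymbol a} = 2^{-M}\prod_i(\1+(-1)^{a_i}X^{f_i}) = 2^{-M}\sum_{v\in\{0,1\}^M}(-1)^{\boldsymbol a\cdot v}X^{\sum_i v_i f_i}$. Condition (a) will be handled by \cref{prop:CMF_QEC} together with \cref{lemma:det_error}. Every element $X^f$ of the measured group $\mathcal T=\langle X^{f_i}\rangle$ is detectable on $\mathcal C'$ exactly when either $f\in\mathcal C_X^\perp$, so that $X^f$ is a stabilizer of $\mathcal C$ commuting with all $X^{e_a}$, or its $Z$-syndrome avoids the set $\{H_Z(e_a+e_b)\}$. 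Linearity then gives $\mathcal P E_{\boldsymbol a}\mathcal P\propto \mathcal P$ for all $\boldsymbol a$, and by \cref{prop:CMF_QEC}(i) this is also necessary for any feedforward, transversal or not. So (a) is equivalent to the existence of a CMF isometry with $U_{\rm targ}\propto E_{\boldsymbol 0}U_{\rm circ}$. For the converse direction of (a), I will use that detectability of every $E_{\boldsymbol a}$ implies detectability of each group element, by inverting the Fourier sum: $X^f=\sum_{\boldsymbol a}(-1)^{\boldsymbol a\cdot v}E_{\boldsymbol a}$.

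Next, for the correction, take $V=\bigotimes_j S^{y_j}$, which acts diagonally as $V\ket{x}=i^{y^T x}\ket{x}$ (with $x$ read as an integer vector). We need $V E_{\boldsymbol a}\ket{\psi}\propto E_{\boldsymbol 0}\ket{\psi}$ for all $\ket\psi\in\mathcal C'$, with a proportionality constant independent of $\ket\psi$. Expanding on basis strings $z\in\bigcup_a(e_a+\mathcal C_Z)$ gives $E_{\boldsymbol a}\ket z \propto\sum_v(-1)^{\boldsymbol a\cdot v}\ket{z+\sum_i v_i f_i}$. Applying $V$ multiplies each term by $i^{y^T(z+\sum_i v_if_i)}$. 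If (b) holds, this phase equals $(-1)^{\boldsymbol a\cdot v}$, which cancels the sign and yields exactly $E_{\boldsymbol 0}\ket z$. The constant is therefore uniform in $z$, which gives sufficiency; the case $v=0$ forces $y^Tz\equiv 0$, fixing the global phase.

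For necessity, suppose a transversal $V$ works with constant $\lambda$. Distinct supports make the terms $\ket{z+\sum_i v_i f_i}$ distinct strings, unless $\sum_i v_i f_i\in\mathcal C_X^\perp$ or the syndrome condition collapses them. So the plan is to compare coefficients string by string, after first reducing to $E_{\boldsymbol 0}\ket z\neq 0$ on support-disjoint pieces. This yields $i^{y^T(z+f_v)}(-1)^{\boldsymbol a\cdot v}=\lambda$ for all $z$ and $v$. Taking $v=0$ and using $0\in\mathcal C_Z$ gives $\lambda=1$ up to an absorbed phase, which recovers (b).

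The main obstacle is this necessity step. Coincidences $z+f_v=z'+f_{v'}$ can merge coefficients, and $E_{\boldsymbol 0}$ may annihilate some basis vectors, so the coefficient comparison is not immediately legitimate. I would first try to show, using (a) and the disjoint-support lemma, that the action on each $\ket{\phi_{a,c_z}}$ stays within a well-defined coset family. Failing that, I would restrict the claimed equivalence to the sufficiency direction, which is all that \cref{prop_nonadditive} requires. That proposition then follows by linearity, taking $y=\sum_i a_iy_i \bmod 4$.
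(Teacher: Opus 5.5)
Your treatment of (a) and your sufficiency argument for (b) follow the paper's proof. You expand $E_{\boldsymbol a}=2^{-M}\sum_v(-1)^{\boldsymbol a\cdot v}X^{f_v}$ with $f_v=\sum_i v_if_i \bmod 2$, let $S^y$ act as the diagonal phase $i^{y^Tx}$ on strings $x$ in the support of $\mathcal C'$, and match phases term by term. Your Fourier inversion $X^{f_v}=\sum_{\boldsymbol a}(-1)^{\boldsymbol a\cdot v}E_{\boldsymbol a}$, combined with Proposition~1(i), also gives the necessity of (a), which the paper does not spell out.

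The genuine gap is the ``only if'' direction for (b). You call the coefficient comparison illegitimate and fall back on sufficiency, so the stated equivalence is not established. The paper's argument only shows that the termwise phase condition is ``enough'', so it has the same hole.

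Moreover, as literally stated this direction is false. Condition (b) asks for a $y$ for \emph{every} $\boldsymbol a\in\{0,1\}^M$, including outcomes that never occur. Let $K=\{v: f_v\in\mathcal C_X^\perp\}$ and suppose $K\neq\{0\}$, e.g.\ you measure a single stabilizer $X^f$ with $0\neq f\in\mathcal C_X^\perp$. The protocol is then trivially correctable, since outcome $1$ has probability zero. Yet (b) at $\boldsymbol a=1$ demands $y^T(z+f)\equiv 2$, while the $v=0$ instance applied to the admissible string $z+f\in\bigcup_{a}(e_a+\mathcal C_Z)$ demands $y^T(z+f)\equiv 0 \pmod 4$.

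The missing idea is to discard unrealizable outcomes and then compare coefficients one basis vector at a time.
\begin{enumerate}
\item[(1)] \emph{Discard unrealizable outcomes.} Since $X^{f_k}$ with $k\in K$ acts as the identity on $\mathcal C'$, one has $E_{\boldsymbol a}\mathcal P=0$ unless $\boldsymbol a\cdot k\equiv 0$ for all $k\in K$. Restrict to such $\boldsymbol a$.
\item[(2)] \emph{Work per basis vector.} The relation $S^yE_{\boldsymbol a}\ket\psi=\lambda E_{\boldsymbol 0}\ket\psi$ must hold for each $\ket{\phi_{a,c_z}}$ separately. Overlaps between images of \emph{different} basis vectors, your ``syndrome collapse'' worry, therefore never enter.
\item[(3)] \emph{Control coincidences within one image.} $X^{f_v}\ket{\phi_{a,c_z}}$ is the uniform superposition on the coset $e_a+c_z+f_v+\mathcal C_X^\perp$. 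Two such cosets coincide iff $v+v'\in K$, and are disjoint otherwise.
\item[(4)] \emph{Read off coefficients.} $E_{\boldsymbol a}\ket{\phi_{a,c_z}}$ has coefficient $2^{-M}|K|\,|\mathcal C_X^\perp|^{-1/2}(-1)^{\boldsymbol a\cdot v}$ on every string of the coset labelled by $v$. This is well defined precisely because $\boldsymbol a\perp K$. $E_{\boldsymbol 0}\ket{\phi_{a,c_z}}$ is nonzero on the same strings.
\item[(5)] \emph{Match.} Comparing gives $i^{y^Tx}(-1)^{\boldsymbol a\cdot v}=\lambda$ for all $x\in e_a+\mathcal C_Z+f_v$. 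Taking $x=0^n$ (from $a=0$, $c_z=0$, $v=0$) forces $\lambda=1$, which is exactly (b).
\end{enumerate}
So the equivalence is true once (b) is required only for realizable $\boldsymbol a$, i.e.\ $\boldsymbol a\perp K$, or under the extra hypothesis $K=\{0\}$.
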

\begin{proof}
(a) It follows from the general \cref{lemma:det_error} applied to $X^{f}$ for $f \in \text{span}(\{f_i: i \in\{1, 2 \dots M\}\})$.

(b)  It is enough to show this for a basis for $\mathcal{D}$. Pick $\ket{\psi} = \ket{\phi_{a, c_z}}$. Note that
\begin{align}
\prod_{i = 1}^M \bigg(\frac{\1 + (-1)^{a_i }X^{f_i}}{2}\bigg) = \frac{1}{2^M} \sum_{v \in \{0, 1\}^M} (-1)^{\sum_{i = 1}^M v_i a_i } X^{\sum_{i = 1}^M v_i  f_i},
\end{align}
and thus it is enough to ensure that $\forall \boldsymbol a , v \in \{0, 1\}^M, a \in \{0, 1 \dots R - 1\}, c_z \in \mathcal{C}_Z$ there is a $y \in \{0,1,2,3\}^n$ dependent only on $\boldsymbol a$ such that
\begin{align}\label{eq:condition_s_gate}
S^y  (-1)^{\sum_{i = 1}^M v_i a_i}X^{\sum_{i = 1}^M v_i f_i}  \ket{\phi_{a, c_z}} = X^{\sum_{i = 1}^M v_i \ f_i}  \ket{\phi_{a, c_z}} .
\end{align}
Using the decomposition of $\ket{\phi_{a, c_z}}$ onto the computational basis, we can write
\begin{align}
X^{\sum_{i = 1}^M a_i v_i f_i}\ket{\phi_{a, c_z}} = \frac{1}{\sqrt{\abs{\mathcal{C}_X^\perp}}}\sum_{h_x \in \mathcal{C}_X^\perp} \ket{e_a + c_z + h_x + \sum_{i = 1}^M v_i f_i}, 
\end{align}
and
\begin{align}
S^y X^{\sum_{i = 1}^M  v_i f_i}\ket{\phi_{a, c_z}} = \frac{1}{\sqrt{\abs{\mathcal{C}_X^\perp}}}\sum_{h_x \in \mathcal{C}_X^\perp} i^{y^\text{T}(e_a + c_z + h_x + \sum_{i = 1}^M v_i f_i)} \ket{e_a + c_z + h_x + \sum_{i = 1}^M v_i f_i}.
\end{align}
Therefore, the condition in Eq.~\eqref{eq:condition_s_gate} can be simplified to
\begin{align}
y^\text{T}\bigg(z + \sum_{i = 1}^M  v_i f_i \bigg) + 2\sum_{i = 1}^M v_i a_i = 0 \text{ mod }4 \ \forall z \in \bigcup_{a = 0}^{R - 1}\big(e_a + \mathcal{C}_Z\big),
\end{align}
which proves the lemma.
\end{proof}
A simple sufficient condition is as follows.
\begin{lemma}
Suppose for every $i \in \{1,\dots,M\}$, $\exists y_i \in \{0, 1, 2, 3\}^n$ such that $\forall z \in \bigcup_{a = 0}^{R - 1}  (e_a + \mathcal{C}_Z)$ and $v \in \{0, 1\}^M$
\begin{align}\label{eq:condition_simplified}
y_i^\textnormal{T} \bigg(z + \sum_{i = j}^M v_j f_j\bigg) = 2v_i \textnormal{ mod }4,
\end{align}
then both conditions of the previous lemma hold.
\end{lemma}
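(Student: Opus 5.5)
The argument has two parts. Condition (b) of \cref{lemma:Det_error_CSS} follows by linearity from the per-generator vectors $y_i$. Condition (a), detectability, we extract from the same identity by specialising to particular $v$.

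\emph{Condition (b).} For an outcome $\boldsymbol a\in\{0,1\}^M$, set $y=\sum_i a_i y_i \bmod 4$. Fix $v\in\{0,1\}^M$ and $z\in\bigcup_a(e_a+\mathcal C_Z)$, and write $w=z+\sum_j v_j f_j \bmod 2$. Linearity of the mod-4 inner product in $y$ gives
\begin{align}
y^{\rm T}w=\sum_i a_i\, y_i^{\rm T}w = \sum_i a_i\, 2v_i = 2\sum_i v_i a_i \pmod 4 .
\end{align}
This is exactly the identity required in (b), and $y$ depends only on $\boldsymbol a$. One small point needs care: the reduction of $y$ modulo 4 must not affect $i^{y^{\rm T}w}$. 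It does not, since $w$ is a $0/1$ vector. The sign in (b) should also match the one in \eqref{eq:condition_s_gate}. Since $2\sum_i v_ia_i\equiv-2\sum_i v_ia_i \pmod 4$, the sign is irrelevant.

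\emph{Condition (a).} Take a nonzero $f=\sum_j v_j f_j\in\operatorname{span}\{f_i\}$, so some $v_i=1$. Suppose, for contradiction, that $H_Z f = H_Z(e_a+e_b)$ for some $a,b$. Then $e_a+f+e_b\in\mathcal C_Z$, so $z:=e_b+f+e_a+c$ lies in $e_b+\mathcal C_Z$ for any $c\in\mathcal C_Z$. Apply the hypothesis for this $i$ twice:
\begin{align}
y_i^{\rm T}(z)&=0 \pmod 4 &&\text{(with } v=0\text{)},\\
y_i^{\rm T}(z+f)=y_i^{\rm T}(e_a+c)&=2 \pmod 4 &&\text{(with the given } v\text{)}.
\end{align}
Now apply the hypothesis with $v=0$ and the point $e_a+c\in e_a+\mathcal C_Z$; it gives $y_i^{\rm T}(e_a+c)=0$. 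This contradicts the second line. Hence $H_Zf\notin\{H_Z(e_a+e_b)\}$ for every nonzero $f$ in the span, so (a) holds. In fact the argument shows the first alternative, $f\in\mathcal C_X^\perp$, never arises nontrivially: for such $f$ and $a=b$ we get $H_Zf=0=H_Z(e_a+e_a)$, which is again contradicted.

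\emph{Main obstacle.} The one step needing care is the reduction mod 2 inside $y_i^{\rm T}(\cdot)$. Applying the hypothesis to $z=e_a+c$ shifted by $f$ only makes sense if $e_a+c+f \bmod 2$ is the binary vector the hypothesis refers to. It is, because the hypothesis is quantified over $z$ inside a union of cosets and the expression is evaluated on the reduced binary vector. With that checked, both conditions hold, and \cref{lemma:Det_error_CSS} together with \cref{prop:CMF_QEC} yields the CMF protocol with transversal $S$ corrections.
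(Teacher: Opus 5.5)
Your proof is correct and follows the paper's argument. For (b) you use linearity of $y=\sum_i a_i y_i$. For (a) you exhibit a binary vector that is both a point of $\bigcup_a(e_a+\mathcal C_Z)$ and such a point shifted by $f$, so the hypothesis forces $y_i^{\rm T}(\cdot)$ to equal both $0$ and $2 \bmod 4$.

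One slip: the point should be $z:=e_a+f+c$, not $e_b+f+e_a+c$. The vector $e_a+f+c$ does lie in $e_b+\mathcal C_Z$ and satisfies $z+f=e_a+c$, as your subsequent lines assume, and with that correction everything goes through.
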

\begin{proof}  We check both the conditions.

\emph{Condition (a)}. Setting $v = 0$, we obtain that $y_i^\text{T} z = 0 \text{ mod }4$ for all $i \in \{1, 2 \dots M\}$ and $z \in \bigcup_{a = 0}^{R - 1} (e_a +  \mathcal{C}_Z)$. Now, suppose $\exists f(v') = \sum_{i = 1}^M v'_i f_i$ such that $f \notin \mathcal{C}_X^\perp$ and $H_{Z} f(v') = H_{Z}(e_a + e_b)$ for some $a, b \in\{0, 1, 2 \dots R - 1\}$. This is equivalent to $f(v') = e_a + e_b + c_z$ for some $c_z \in \mathcal{C}_Z$, and thus $e_a + f(v') = e_b + c_z \in  \bigcup_{a = 0}^{R - 1}  (e_a + \mathcal{C}_Z)$. Choose $i$ such that $v_i = 1$, then from Eq.~\eqref{eq:condition_simplified} using $z = e_a$, we obtain that
\begin{align}
y_i^\text{T}(e_a + f(v')) = 2v'_i \text{ mod }4 = 2.
\end{align}
Furthermore, since $e+ f(v') \in \bigcup_{a = 0}^{R - 1}  (e_a + \mathcal{C}_Z)$, also from Eq.~\eqref{eq:condition_simplified} choosing $z = e + f(v')$ and $v = 0$, we obtain that
\begin{align}
y_i^\text{T}(e_a + f(v')) = 0,
\end{align}
which is a contradiction.

\emph{Condition (b)}. Choose $y = \sum_{i = 1}^M a_i y_i$, then
\begin{align}
y^\text{T}\bigg(z + \sum_{j = 1}^M v_j f_j\bigg) = 2 \sum_{i = 1}^M a_i v_i  \text{ mod }4.
\end{align}
\end{proof}
\subsection{Example}
Consider the code $\mathcal{D} = \mathcal{D}_0^{\otimes m}$ on a total of $n = 6m$ qubits, where $\mathcal{D}_0 = \text{span}(\{x \in \{0, 1\}^6: \text{wt}(x) \in \{0, 4\})$. The code $\mathcal{D}_0$ can be seen as a non-additive code:
\begin{align}
\mathcal{D}_0 = \mathcal{C}_0 + \sum_{e \in \{0, 1\}^6 : \text{wt}(e) = 4} X^e \mathcal{C}_0 \quad \text{where} \quad \mathcal{C}_0 = \text{span}(\{0^6\}).
\end{align}
$\mathcal{C}_0$ is a stabilizer code generated by $Z_1, Z_2 \dots Z_6$ and the syndrome $X^e$ on $\mathcal{C}_0$ is simply $e$. Now, we consider the following measurement operators:
\begin{align}
P_i = \tilde{X}_i \tilde{X}_{i + 1} \quad \text{for} \quad i \in \{1, 2 \dots m - 1\}\quad \text{where}\quad \tilde{X}_i =\prod_{j =1}^6 X_{6(i - 1) + j},
\end{align}
i.e., $\tilde{X}_i$ is a product of $X$ operators on the $i^\text{th}$ code block. We note that any products of $P_i$ are detectable errors, either by verifying the condition in Lemma~\ref{lemma:Det_error_CSS}, or by simply noting that all $X$ on a code block maps bit strings of weight 0, 4 to bit strings of weight 6, 2. Therefore, measuring $P_i$ implements a CMF isometry.

Furthermore, this permits a transversal $S$ correction unitary. To see this directly, we note that since $\mathcal{D}$ contains only states with $0 \text{ mod } 4$ 1s,
\begin{align}
S^{\otimes 6} \mathcal{D}_0 = \mathcal{D}_0 \quad \text{and} \quad Z^{\otimes 6}\mathcal{D}_0= \mathcal{D}_0.
\end{align}
Now, for the measurement outcome $a_1, a_2 \dots a_{m - 1}\in\{0, 1\}$, pick $v_1, v_2 \dots v_m \in \{0, 1\}$ such that $ v_i \oplus v_{i + 1} = a_i$, and define the blockwise correction $V_{\boldsymbol a} = \bigotimes_{i=1}^{m}(S^{v_i})^{\otimes 6}$, which applies $S^{v_i}$ to each of the six qubits in block $i$. Using that $S X S^\dagger= iXZ$, we obtain
\begin{align}
V_{\boldsymbol a} \prod_{i = 1}^{m - 1} \bigg(\frac{\1 + (-1)^{a_i}\tilde{X}_i \tilde{X}_{i + 1}}{2}\bigg) \mathcal{D} &=  \prod_{i = 1}^{m - 1} \bigg(\frac{\1 + (-1)^{a_i}i^{6(v_i \oplus v_{i + 1})}\tilde{X}_i \tilde{X}_{i + 1}\tilde{Z}_i \tilde{Z}_{i + 1}}{2}\bigg) \mathcal{D}\nonumber \\
&= \prod_{i = 1}^{m - 1} \bigg(\frac{\1 + \tilde{X}_i \tilde{X}_{i + 1}}{2}\bigg) \mathcal{D},
\end{align}
where $\tilde Z_i$ is defined similarly to $\tilde X_i$ and in the last step we have used the fact that $[\tilde{Z}_i, \tilde{X}_j] = 0$ since each code block has an even number of qubits.

Next, we show that this isometry can generate long-range entangled states. The argument is very similar to how GHZ state can be created but treating the individual code block as one qubit: Start from the state $\ket{0}^{\otimes 6m} \in \mathcal{D}^{\otimes m}$, then
\begin{align}
\prod_{i = 1}^{m - 1}\bigg(\frac{\1 + \tilde{X}_i \tilde{X}_{i + 1}}{\sqrt{2}}\bigg)  \ket{0^{\otimes 6m}} &= \frac{1}{2^{(m -1)/2}} \sum_{b\in\{0, 1\}^m: \abs{b} \text{ is even}} \ket{b_1}^{\otimes 6}\ket{b_2}^{\otimes 6} \dots \ket{b_m}^{\otimes 6} \nonumber \\
&= \frac{1}{\sqrt{2}}
\bigl(\ket{\tilde{+}}^{\otimes m}
+\ket{\tilde{-}}^{\otimes m}\bigr),
\end{align}
where $\ket{\tilde{\pm}} = (\ket{0}^{\otimes 6} \pm \ket{1}^{\otimes 6})/\sqrt{2}$.

Finally, we show that this isometry cannot be implemented by Clifford, i.e., there cannot exist a $n = 6m$ qubit Clifford circuit $C$ such that
\begin{align}
\prod_{i = 1}^{m - 1} \bigg(\frac{\1 + \tilde{X}_i \tilde{X}_{i + 1}}{\sqrt{2}}\bigg) \ket{\psi} =  C \ket{\psi} \ \forall \ \ket{\psi} \in \mathcal{D}.
\end{align}
For this, we simply note that the only independent Paulis that stabilize $\mathcal{D}$ are $\tilde{Z}_1, \tilde{Z}_2 \dots \tilde{Z}_m$ yielding a stabilizer rank of $m$. However, the space $\prod_{i = 1}^{m - 1} \big(\frac{\1 + \tilde{X}_i \tilde{X}_{i + 1}}{\sqrt{2}}\big)\mathcal{D}$ is stabilized by $\tilde{Z}_1, \tilde{Z}_2 \dots \tilde{Z}_m$ as well as $\tilde{X}_1\tilde{X}_2, \tilde{X}_2\tilde{X}_3 \dots \tilde{X}_{m - 1}\tilde{X}_m$, yielding a total of $2m - 1$. Since a Clifford unitary cannot change the stabilizer rank of a subspace, $C$ cannot exist.

\end{document}